%% file: iclr2025_conference.tex
\documentclass[conference]{ieeeconf} 
\IEEEoverridecommandlockouts

\usepackage{etoolbox}
\newbool{techreport}
\booltrue{techreport}

\usepackage{silence}
\ActivateWarningFilters[captions]
\usepackage{subcaption}
\DeactivateWarningFilters[captions]

\usepackage{amsmath, amssymb, amsthm, amsfonts} 
\usepackage{mathrsfs}                           
\usepackage{bm}                                 
\usepackage{graphicx}                           
\usepackage{booktabs}                           
\usepackage{multirow}                           
\usepackage{hyperref}                           
\usepackage{url}                                
\usepackage{xcolor}                             
\usepackage{algorithm}                          
\usepackage{algorithmicx}                       
\usepackage{algpseudocode}                      
\usepackage{float}                              
\usepackage{comment}                            
\usepackage{subcaption}
\usepackage{caption}
\usepackage{mathtools}
\mathtoolsset{showonlyrefs}
\usepackage{algorithm}
\usepackage{algpseudocode} 

\usepackage{cite}

\usepackage{caption}
\usepackage{float} 

\usepackage{float}    
\usepackage{placeins} 

\usepackage{tikz}
\usepackage{tikz-3dplot}
\usepackage{pgfplots}
\pgfplotsset{compat=1.18}
\usepgfplotslibrary{groupplots}
\usepackage{graphicx}

\newlength{\figwidth}
\newlength{\figheight}
\definecolor{pastelBlue}{RGB}{163,199,232}
\definecolor{pastelSand}{RGB}{226,211,165}

\newtheorem{theorem}{Theorem}
\newtheorem{proposition}{Proposition}
\newtheorem{lemma}{Lemma}
\newtheorem{definition}{Definition}
\newtheorem{assumption}{Assumption}[]
\newtheorem{problem}{Problem}
\newtheorem{remark}{Remark}

\newcommand{\NE}{\mathrm{NE}}
\newcommand{\WNE}{\mathrm{WNE}}

\newcommand{\cone}{\mathrm{cone}}

\DeclareMathOperator*{\argmin}{arg\,min}

\input{math_commands.tex}
\ifbool{techreport}{
\title{Online Scalarization in Vector-Valued Games}
}
{
\title{Online Scalarization in Vector-Valued Games}
}

\author{Ehsan Asadollahi, Calvin Hawkins, Matthew Hale\textsuperscript{1}\thanks{\textsuperscript{1}School of Electrical and Computer Engineering, Georgia Institute of Technology, Atlanta, GA, USA.
Emails: {\texttt{\{sasadoll3,chawkins64,mhale30\}@gatech.edu}}}
\thanks{
This work was supported by ONR under grant N00014-24-1-2432.}}

\begin{document}

\maketitle

\input{sec/0_abstract}    
\input{sec/1_Introduction}
\input{sec/2_background}

\input{sec/3_Problem_Setup}
\input{sec/4_algorithms}
\input{sec/5_regretbounds}
\input{sec/6_Empirical_Validation}

\input{sec/7_Conclusion}

\bibliography{iclr2025_conference}
\bibliographystyle{ieeetr}

\ifbool{techreport}{
  \input{app/app_A}
}{
}



\end{document}

%% file: sec/0_abstract.tex
\begin{abstract}
We study repeated multi-player vector-valued games in which a player observes 
a payoff vector each round and evaluates outcomes through linear scalarizations
of those vectors. Different from most prior works, the choice of scalarization is treated as an online decision variable rather than a fixed modeling decision. We propose a bi-level learning framework in which an outer learner chooses a scalarization from a finite candidate class on a slow timescale, while a faster inner bandit no-regret learner selects actions using the scalar feedback induced by the chosen scalarization. Performance of this approach is defined with respect to a certain true weight vector, and the deployed scalarizations act as control signals that shape the induced payoff trajectory. We provide implementable algorithms based on bandit online mirror descent with stabilized importance weighting, and we derive finite-time performance guarantees in the form of sublinear regret bounds. Experiments on a vector-valued extension of a canonical game show that convergence to the preferred equilibrium rises from roughly $50\%$ under non-adaptive scalarization to about $80\%$ under our proposed method. 

\end{abstract}

%% file: sec/1_Introduction.tex
\section{Introduction}

Multi-agent systems often have more than one objective, and agents may operate under multiple criteria such as safety, efficiency, fairness, robustness, and compliance, which may not be naturally commensurate~\cite{RVWD13,Hay21,Kha24}. Such formulations naturally lead to vector-valued measures of performance, where each decision produces a multi-dimensional outcome and agents must evaluate that outcome according to their own preferences. In many settings, 
vector-valued decision-making proceeds by scalarizing vector-valued outcomes using a fixed weight vector that specifies the relative importance of the different objectives.

We study such vector-valued outcomes in the setting of repeated games, but
rather than treating a scalarization as a fixed modeling decision, we treat it as an algorithmic control knob that a player updates across repeated interactions. The trajectory of play depends on the choice of weight vector because that choice affects the player's actions and thus how opponents respond to those actions. 
As a result, the time-varying weight vector used in a player's learning rule can be chosen to shape its interactions.


We study this approach in a multi-player repeated game with vector-valued outcomes, and we design
strategies for a choice of focal player. 
The focal player evaluates performance using a fixed objective weight vector that defines its performance criterion, 
but it may deploy alternative weight vectors over time.
The reason for using different weight vectors is that we do not
use a fully adversarial minimax model in which opponents can select responses to negate any potential benefit from such  reweighting. In that worst-case setting, adapting weights on objectives cannot be expected to improve worst-case performance. 

Our aim is instead to understand how adaptation can improve performance in responsive environments in which the opponent's behavior depends on the observed outcomes. For example, the opponent 
may itself be a learning agent, such as a no-regret learner. This formulation creates an endogenous coupling: changing the focal player's deployed weight vector changes the player's actions, which can change the opponents' responses, which changes future outcomes and ultimately changes the player's performance with respect to its objective. Consequently, simply deploying the objective weight vector need not be optimal in finite time. 

We therefore search over weight vectors 
using a bi-level learning architecture. The inner layer is an action-selection learner that operates under the deployed weight vector and updates a mixed action distribution from bandit feedback, meaning that it observes only the payoff of the action actually played rather than the payoffs of all actions. The outer layer selects the deployed weight vector and updates a distribution over deployed weight vectors on a slower timescale. The outer layer is evaluated using outcomes aggregated over time and scored under the fixed objective weight vector, while the inner layer uses per-round bandit feedback computed under the deployed weight vector. This separation between shaping and evaluation enables deployed-weight selection to influence the coupled learning dynamics without changing the focal player's ultimate objective.

To summarize, our contributions are as follows: 
\begin{itemize}
\item We propose a bi-level learning architecture for online scalarization in vector-valued games 
(Algorithms~\ref{alg:outer_weight} and~\ref{alg:inner_action}). 

\item 
We derive finite-time regret bounds for both levels of the algorithm, 
and we show sublinear regret for both (Theorems~\ref{thm:inner_block_regret} and~\ref{thm:outer_block_regret}). 

\item We demonstrate experimentally in a representative two-equilibrium game that convergence to the objective-favored equilibrium increases from roughly \(50\%\) under standard no-regret learning to about \(80\%\) under our method (Section~\ref{sec:experiments}).
\end{itemize}

\subsection{Related Work}
One classical viewpoint 
on vector-valued games 
is set-based and uses Blackwell's approachability framework. In repeated vector-valued interactions in this framework, a player seeks to drive the time-average payoff toward a target set $C \subseteq \mathbb{R}^d$~\cite{Blk56}. A set $C$ is ``approachable'' if there exists a strategy such that the time-average payoff converges to $C$ against any opponent~\cite{Per18,ABH11,Shi16,PM13,GM25,FKS21,MT06}.
In contrast, we do not study robust convergence to a fixed target set. Instead, we study online selection of scalarizations as a mechanism for shaping equilibrium outcomes.


Another classical viewpoint is direction-based and obtains a scalar decision criterion from vector outcomes through linear scalarization~\cite{RVWD13,Ehr05,Mie99,boyd04}: a player selects a weight vector and evaluates outcomes by projection onto that direction. This approach produces a family of induced scalar games indexed by the choice of weight vectors. 
Under suitable regularity conditions, changing the scalarization can recover different weak equilibria of the original vector-valued game~\cite{CKR25}.
This equilibrium correspondence highlights a key modeling point: changes
in weight vectors can induce changes in the effective incentives and thereby 
induce changes in players' best responses and the resulting equilibria. We act on this idea by developing a bi-level learning algorithm in which the player adaptively selects its weight vector online in order to shape the induced scalar game and, in turn, the resulting equilibrium outcomes.

The rest of the paper is organized as follows. Section~\ref{sec:background} reviews vector-valued games and states the problems we solve.
Section~\ref{sec:algorithms} presents the bi-level learning protocol. Section~\ref{sec:theory} develops finite-time regret guarantees. Section~\ref{sec:experiments} reports empirical results, and Section~\ref{sec:conclusion} concludes.

\textbf{Notation:}
We use~$\mathbb{R}$ to denote the reals and~$\mathbb{N}$ to denote the positive integers. 
For $n\in\mathbb{N}$, 
we define
$[n]:=\{1,\dots,n\}$. Given a collection $(z_i)_{i\in N}$, we use $z_{-i}:=(z_j)_{j\neq i}$ to denote the 
sub-collection of all components except the $i^{th}$, and we use $(z_i;z_{-i})$ to denote the 
collection formed by combining $z_i$ with $z_{-i}$. We use $\|x\|_p$ to denote the $\ell_p$-norm
of a vector~$x$, and $B_p(r):=\{x:\|x\|_p\le r\}$ is the ~$\ell_p$-norm closed ball of radius~$r$
centered on the origin. 
The notation $\mathrm{conv}(\mathcal{S})$ denotes the convex hull of a set~$\mathcal{S}$. 
For a finite set $\mathcal{S}$, the notation $\Delta(\mathcal{S})$ denotes the probability simplex over $\mathcal{S}$. We denote by \(\cone(A)\) the conic hull of a set \(A\), i.e., the set of all finite non-negatively weighted linear combinations of elements of \(A\). We use \(S^{d-1}:=\{x\in\mathbb{R}^{d}:\|x\|_2=1\}\) 
to denote the unit sphere in \(\mathbb{R}^{d}\).

%% file: sec/2_background.tex
\section{Background and Problem Statements}
\label{sec:background}

This section provides background on vector-valued games and then gives problem statements. 

\subsection{Vector-Valued Games}
\label{subsec:vector_valued_games}

We study repeated interactions in a vector-valued game among
$n \in \mathbb{N}$ players. Let
\begin{equation} \label{eq:Ndef}
N=[n]
\end{equation}
be the set of their indices, and consider $T\in\mathbb{N}$ rounds of the game. 
At each round \(t\in[T]\), each player \(i \in N\) chooses an action
\begin{equation} \label{eq:Aidef}
a_{i,t}\in\mathcal{A}_i,
\end{equation}
where each~$\mathcal{A}_i$ is a finite non-empty set. 
The payoff vector to player \(i\) 
for these actions is
\begin{equation}
\label{eq:stage_payoff_i}
u_{i,t} := u_i(a_{i,t},a_{-i,t})\in\mathbb{R}^{d_i},
\quad
u_i:\mathcal{A}_i\times\mathcal{A}_{-i}\to\mathbb{R}^{d_i}
\end{equation}
for some~$d_i \in \mathbb{N}$. 
We write the time-averaged payoff over~$t~\in~[T]$ as
\begin{equation}
\label{eq:avg_vector_payoff_i}
\bar u_{i,T} := \frac{1}{T}\sum_{t=1}^T u_{i}(a_{i,t},a_{-i,t}).
\end{equation}

\subsubsection{Cone-induced preferences and equilibria}
For each~$i~\in~N$, player~$i$ 
 chooses a strategy \(x_i \in X_i\), where \(X_i\) is a nonempty compact convex set. 
 The joint strategy profile is 
 \begin{equation} \label{eq:Xdef}
 x=~(x_1,\dots,x_n) \in X:=\prod_{i\in N} X_i. 
 \end{equation} 
If the joint strategy~$x$ is used at time~$t$, then player~$i$ receives
 the payoff~$u_{i, t}(x)$.

In a repeated game, the action set \(\mathcal{A}_i\) determines the set of actions available to player \(i\) at each round, and the strategy set \(X_i\) for player \(i\) contains the set of decision rules available to that player. Thus, a strategy
\begin{equation} \label{eq:Xidef}
x_i \in X_i
\end{equation}
may be interpreted as a rule for selecting actions from \(\mathcal{A}_i\). 
For example, if player \(i\) uses mixed actions, then \(X_i\) may be the simplex over the action set \(\mathcal{A}_i\). 

Preferences over payoff vectors are modeled by a cone-induced partial order. A non-empty 
set \(K\subseteq\mathbb{R}^{d}\) is called a convex cone if \(\alpha x + \beta y \in K\) for all \(x,y\in K\) and all \(\alpha,\beta\ge 0\). For each~$i \in N$, 
let 
\begin{equation} \label{eq:Kidef1}
K_i \subseteq \mathbb{R}^{d_i}
\end{equation}
be a non-empty, closed, convex cone that encodes player \(i\)'s preferences. 
Then, for \(a,b\in\mathbb{R}^{d_i}\) we write
\begin{equation} \label{eq:Kidef}
a \le_{K_i} b
\quad \textnormal{ if and only if } \quad
b-a \in K_i,
\end{equation}
which means that \(b\) is weakly preferred to \(a\) 
(namely, $b$ is no worse than~$a$)
under the order induced by \(K_i\). We write \(a <_{K_i} b\) if \(b-a \in \operatorname{int}(K_i)\),
and then~$b$ is preferred to~$a$ because it is strictly better than~$a$. 

A vector-valued game takes the form 
\begin{equation}
\label{eq:Vector_valued_Game}
G := \bigl(N,\{\mathcal{A}_i\}_{i\in N}, \{u_i\}_{i\in N}, \{X_i\}_{i\in N}, \{K_i\}_{i\in N}\bigr),
\end{equation}
where~$N$ is from~\eqref{eq:Ndef},
$\mathcal{A}_i$ is from~\eqref{eq:Aidef},
$u_i$ is from~\eqref{eq:stage_payoff_i},
$X_i$ is from~\eqref{eq:Xidef},
and $K_i$ is from~\eqref{eq:Kidef1}.

We now define equilibria for vector-valued games. 

\begin{definition}[Equilibrium notions in a vector-valued game {\cite{CKR25}}]
\label{def:vector_game_equilibria}
Consider~$X$ from~\eqref{eq:Xdef}. An element~$x^\star~\in~X$ is called 
\begin{enumerate}
\item[(i)] a Nash equilibrium if for all \(i \in N\) and all \(x_i\in X_i\),
the condition $u_i(x_i;x_{-i}^\ast) \le_{K_i} u_i(x_i^\ast;x_{-i}^\ast)$
implies that
$u_i(x_i^\ast;x_{-i}^\ast) \le_{K_i} u_i(x_i;x_{-i}^\ast)$.
\item[(ii)] a weak Nash equilibrium if for all \(i \in N\) there does not exist \(x_i\in X_i\) such that
$u_i(x_i;x_{-i}^\ast) <_{K_i} u_i(x_i^\ast;x_{-i}^\ast)$.
\end{enumerate}
We denote the corresponding sets 
of equilibria
by \(\NE(G)\) and \(\WNE(G)\), 
respectively, 
and we have~\(\NE(G)~\subseteq~\WNE(G)\).
\end{definition}

In words, a Nash equilibrium is a profile at which each player’s outcome is maximal with respect to the cone order among all unilateral deviations, whereas a weak Nash equilibrium is a profile at which no player can obtain a strictly cone-better outcome by deviating unilaterally.

\subsubsection{Linear scalarization and equilibrium correspondence}
\label{subsubsec:linear_scalarization}

A linear scalarization converts vector payoffs into scalar utilities by projecting vector payoffs onto a weight vector. 
For each~\(i \in N\), consider player~\(i\)'s closed convex preference cone~\(K_i\) and its induced order~\(\le_{K_i}\) from~\eqref{eq:Kidef}. For each~$i \in N$, 
the admissible weights are naturally tied to 
player~$i$'s cone-induced order through the dual cone of~$K_i$. The Euclidean dual cone of \(K_i\) is
\[
K_i^* := \{\psi_i \in \mathbb{R}^{d_i} : \langle \psi_i, y\rangle \ge 0 \ \text{for all } y\in K_i\}.
\]
If a joint strategy profile \(x \in X\)  is used, then player \(i\) receives the vector payoff \(u_i(x)\). For each \(\psi_i \in K_i^* \setminus \{0\}\), the corresponding linear scalarization is
\[
f_i^{\psi_i}(x) := \langle \psi_i, u_i(x)\rangle.
\]
Thus, if a strategy profile \(x\) is played, then player \(i\)'s vector payoff \(u_i(x)\) is evaluated through the weight vector \(\psi_i\).

Now fix a weight profile \(\psi = (\psi_1,\dots,\psi_n)\) with \(\psi_i~\in~K_i^*~\setminus~\{0\}\) for all \(i \in N\). This profile induces a scalarized game in which player \(i\)'s payoff is \(f_i^{\psi_i}(x)\). We write \(\NE(\psi)\) for the set of Nash equilibria of this scalarized game.
Restricting weights \(\psi_i\) to \(K_i^* \setminus \{0\}\) ensures that the induced scalarization is monotone with respect to the preference order induced by \(K_i\), which we prove in Section~\ref{sec:algorithms}.


\begin{proposition}[Scalarization characterization of weak Nash equilibria {\cite{CKR25}}]
\label{prop:wne_union_scalar_ne}
For the vector-valued game \(G\), the set of weak Nash equilibria satisfies
\[
\WNE(G)
=
\bigcup_{\psi \in \prod_{i=1}^n (K_i^*\setminus\{0\})}
\NE(\psi).
\]
\end{proposition}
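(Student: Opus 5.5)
The plan is to prove the two inclusions separately, fixing a player \(i\) and the opponents' profile \(x^\ast_{-i}\) throughout. I read condition (ii) of Definition~\ref{def:vector_game_equilibria} in its intended sense: no unilateral deviation is strictly cone-better for the deviator. That is, there is no \(x_i\in X_i\) with \(u_i(x_i^\ast;x^\ast_{-i}) <_{K_i} u_i(x_i;x^\ast_{-i})\).

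I will use two standing assumptions, both natural in the setting of~\cite{CKR25}. First, \(u_i(\cdot\,;x_{-i})\) is affine in \(x_i\), as for expected payoffs under mixed strategies. Second, each \(K_i\) is solid, i.e.\ \(\operatorname{int}K_i\neq\emptyset\). Solidity is needed for \(\subseteq\): if \(\operatorname{int}K_i=\emptyset\), every profile is trivially a weak Nash equilibrium, and the equality fails in general. I flag these two assumptions as the main points to justify.

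\textbf{Inclusion \(\supseteq\).} Let \(x^\ast\in\NE(\psi)\) with \(\psi_i\in K_i^\ast\setminus\{0\}\), and suppose toward a contradiction that \(y:=u_i(x_i;x^\ast_{-i})-u_i(x^\ast)\in\operatorname{int}K_i\) for some \(x_i\in X_i\).

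1. Choose \(\varepsilon>0\) small enough that \(y-\varepsilon\psi_i/\|\psi_i\|_2\in K_i\).
2. Since \(\psi_i\in K_i^\ast\), pairing with \(\psi_i\) gives \(\langle\psi_i,y\rangle\ge\varepsilon\|\psi_i\|_2>0\).
3. Hence \(f_i^{\psi_i}(x_i;x^\ast_{-i})>f_i^{\psi_i}(x^\ast)\), which contradicts \(x^\ast_i\) being a best response in the scalarized game.

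The same computation shows that nonzero dual weights are strictly monotone with respect to \(<_{K_i}\).

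\textbf{Inclusion \(\subseteq\).} Let \(x^\ast\in\WNE(G)\) and define
\[
D_i:=\{u_i(x_i;x^\ast_{-i})-u_i(x^\ast): x_i\in X_i\}.
\]
1. \(D_i\) is convex, as the affine image of the convex set \(X_i\), and it contains \(0\).
2. The weak-equilibrium property says exactly that \(D_i\cap\operatorname{int}K_i=\emptyset\).
3. Since \(\operatorname{int}K_i\) is nonempty and convex, the separating hyperplane theorem gives \(\psi_i\neq0\) and \(c\in\mathbb{R}\) with \(\langle\psi_i,d\rangle\le c\le\langle\psi_i,k\rangle\) for all \(d\in D_i\) and \(k\in\operatorname{int}K_i\).

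I then read off three consequences.

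1. Scaling \(k\to0\) within the cone \(\operatorname{int}K_i\) gives \(c\le0\).
2. Scaling \(k\) by large positive factors shows \(\langle\psi_i,k\rangle\ge0\) on \(\operatorname{int}K_i\). Since \(K_i\) is closed, convex and solid, \(K_i\) is the closure of its interior, so this extends to all of \(K_i\) and \(\psi_i\in K_i^\ast\setminus\{0\}\).
3. Combining \(c\le 0\) with the first inequality gives \(\langle\psi_i,d\rangle\le0\) for all \(d\in D_i\). That is, \(f_i^{\psi_i}(x_i;x^\ast_{-i})\le f_i^{\psi_i}(x^\ast)\) for all \(x_i\), so \(x_i^\ast\) is a best response under \(\psi_i\).

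Doing this independently for every \(i\) yields a profile \(\psi\) with \(x^\ast\in\NE(\psi)\).

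The main obstacle is the separation step. It relies on convexity of \(D_i\), which follows from affinity of \(u_i\) in the player's own strategy, and on solidity of \(K_i\). Both must hold, or be assumed, for the equality to be true. Everything else is routine cone duality.
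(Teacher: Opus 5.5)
The paper does not prove this proposition. It states it with a citation to \cite{CKR25}, so the only comparison available is with that imported result. Your separation argument is correct and is the standard route to this kind of characterization. The $\supseteq$ inclusion is strict monotonicity of nonzero dual weights on $\operatorname{int}K_i$. The $\subseteq$ inclusion separates the convex set $D_i\ni 0$ from the open convex cone $\operatorname{int}K_i$, and the cone structure then gives $c\le 0$ and $\psi_i\in K_i^*\setminus\{0\}$. (Since $0\in D_i$ also forces $c\ge 0$, your second consequence is immediate.)

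What your self-contained proof buys over the bare citation is that it exposes three hypotheses the paper leaves implicit.

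\begin{enumerate}
\item \emph{The definition has a sign reversal.} Definition~\ref{def:vector_game_equilibria}(ii), read literally with $a<_{K_i}b$ meaning $b-a\in\operatorname{int}K_i$, rules out deviations that are strictly \emph{worse}. This looks like a remnant of a loss-minimization convention. Your reversed reading matches the paper's verbal description and the maximization of $f_i^{\psi_i}$, and it is needed for the equality to hold at all.
\item \emph{Convexity of the deviation set is needed.} The paper puts no structure on $u_i$ over the abstract sets $X_i$, but the $\subseteq$ direction needs $D_i$ convex, or at least $D_i-K_i$ convex; the latter is disjoint from $\operatorname{int}K_i$ and separates the same way. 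Affinity in the player's own strategy holds for the mixed extension of a finite game, which is the matrix-game setting of \cite{CKR25}.
\item \emph{Solidity of $K_i$ is necessary.} It is not implied by the paper's standing assumptions, including polyhedrality. Take $K_i=\{0\}$, which is a closed polyhedral cone. Every profile is then a weak Nash equilibrium. Yet $K_i^*=\mathbb{R}^{d_i}$, and no nonzero $\psi_i$ makes $x_i^\ast$ a best response whenever $D_i$ contains a neighborhood of $0$.
\end{enumerate}
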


Proposition~\ref{prop:wne_union_scalar_ne} implies that weight vectors can be used as an equilibrium-selection device: changing \(\psi\) changes the induced scalar 
game and hence changes which equilibria in \(\NE(\psi)\) may be reached, 
while \(\WNE(G)\) is determined by~\(G\).


\subsection{Problem Formulation} \label{subsec:problem_setup}
Proposition~\ref{prop:wne_union_scalar_ne} shows that changing the scalarization can change which equilibria are reached. Motivated by this equilibrium correspondence, we distinguish between two roles for weight vectors in our setting: evaluation and deployment.

For a fixed player \(i \in N\), let
\begin{equation}
\label{eq:obj_psi}
\psi_i^{\text{obj}} \in K_i^* \setminus \{0\}
\end{equation}
denote the fixed objective weight vector that defines the performance criterion used to evaluate player~$i$'s realized average vector payoff. Over a horizon \(T\), player \(i\)'s objective performance is
\begin{equation}
\label{eq:true_objective_i}
\Big\langle \psi_i^{\text{obj}}, \bar u_{i,T} \Big\rangle,
\end{equation}
where \(\bar u_{i,T}\) is the time-averaged payoff from \eqref{eq:avg_vector_payoff_i}.

During learning, however, player \(i\) need not use \(\psi_i^{\text{obj}}\) inside its update rule. Instead, at each round \(t \in [T]\), the player may deploy any weight vector
$\psi_{i,t} \in K_i^* \setminus \{0\}$
and use it to guide action selection at round \(t\). 
The deployed weight affects the player's updates, which affect the player's actions, which in turn can influence how the other players respond.
\begin{remark}
\label{rem:responsive_environment}
In our formulation, the remaining players are not assumed to form a fully adversarial environment. Rather, they are modeled as payoff-responsive to the realized interaction; for example, they may update their own actions through no-regret learning based on their own scalar feedback. Consequently, the realized trajectory of play, and hence the average payoff \(\bar u_{i,T}\), depends on the deployed weight used by player \(i\).
\end{remark}

To make this dependence explicit, let \(\bar u_{i,T}(\psi_i)\) denote the average payoff generated when player \(i\) fixes the deployed weight \(\psi_i\) over the full time horizon while the other players continue to adapt. The resulting objective value is then
\begin{equation}
\label{eq:psi_to_value_mapping_i}
\Big\langle \psi_i^{\text{obj}}, \bar u_{i,T}(\psi_i) \Big\rangle.
\end{equation}
Because the other players react to the realized interaction, this mapping need not be maximized at \(\psi_i = \psi_i^{\text{obj}}\). In other words, the weight vector that is best for evaluating outcomes need not be the weight vector that is optimal to deploy inside the learning dynamics. 

This setup leads to the following problem statements.

\begin{problem}
\label{prob:intent_algorithm}
Design an algorithm for player \(i\) that selects actions \(\{a_{i,t}\}_{t=1}^T\) and adaptively chooses deployed weights \(\{\psi_{i,t}\}_{t=1}^T\) to maximize
$\mathbb{E}\left[\left\langle \psi_i^{\text{obj}}, \bar u_{i,T} \right\rangle\right]$,
where \(\psi_i^{\text{obj}}\) is fixed. 
\end{problem}

\begin{problem}
\label{prob:regret_theorem}
Provide a finite-time regret bound for the proposed algorithm.
\end{problem}

\begin{problem}
\label{prob:experiments}
Empirically evaluate 
how adaptively chosen weights \(\{\psi_{i,t}\}_{t=1}^T\) 
lead to improvement in 
$\left\langle \psi_i^{\text{obj}}, \bar u_{i,T} \right\rangle$
relative to a baseline that deploys \(\psi_{i,t} \equiv \psi_i^{\text{obj}}\) for all \(t \in [T]\).
\end{problem}

%% file: sec/4_algorithms.tex
\section{Algorithms}
\label{sec:algorithms}

This section solves Problem~\ref{prob:intent_algorithm}. 
We fix a focal player with index~$i \in N$, and
we introduce a bi-level learning algorithm for player~$i$ that deploys a scalarization \(\psi_i\) to shape its learning dynamics and update its mixed actions from bandit feedback. The following assumptions are
first made. 

\begin{assumption}
\label{ass:standing_algorithms}
Fix a focal player $i \in N$. Then: 
\begin{enumerate}
\item \textbf{Bounded vector payoffs.}
\label{ass:standing_1}
The stage payoff map \(u_i:~\mathcal{A}_i\times\mathcal{A}_{-i}\to\mathbb{R}^{d_i}\)  is coordinate-wise 
bounded, i.e., there exists \(U>0\) such that
\begin{equation}
\label{eq:bounded_payoffs}
\begin{aligned}
\|u_i(a_{i,t},a_{-i,t})\|_\infty \le U \\
\end{aligned}
\end{equation}
for all $(a_{i,t},a_{-i,t})\in\mathcal{A}_i\times\mathcal{A}_{-i}$.

\item \textbf{Polyhedral preference cone.}
\label{ass:standing_2}
The preference cone \(K_i\) is polyhedral. 
\item \textbf{Normalization of candidate weights.}
\label{ass:standing_3}
We restrict attention to normalized candidate weights, that is,
\begin{equation}
\label{eq:psi_unit_norm}
\psi_i \in \bigl(K_i^* \setminus \{0\}\bigr)\cap S^{d_i-1}.
\end{equation}
\end{enumerate}
\end{assumption}

The following proposition shows that if the preference cone is polyhedral, then its dual cone admits a finite generating set.

\begin{proposition}[Finite generating set]
\label{prop:finite_candidate_class}
Suppose Assumption~\ref{ass:standing_algorithms}.\ref{ass:standing_2} holds. Then there exists a finite set
\[
\Psi_i=\{\psi_i^1,\dots,\psi_i^{m}\}\subseteq K_i^*\setminus\{0\}
\]
such that
$K_i^*=\cone(\Psi_i)$.
\end{proposition}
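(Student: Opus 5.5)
The plan is to reduce the statement to the Minkowski--Weyl theorem for polyhedral cones, combined with duality of polyhedral cones. By Assumption~\ref{ass:standing_algorithms}.\ref{ass:standing_2}, $K_i$ is polyhedral. I will take this to mean that there is a matrix $A\in\mathbb{R}^{p\times d_i}$ with $K_i=\{y\in\mathbb{R}^{d_i}: Ay\ge 0\}$. Equivalently, by Minkowski--Weyl, $K_i$ is finitely generated. I work with the inequality form and let $a_1,\dots,a_p$ denote the rows of $A$.

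\textbf{Step 1: identify the dual cone.} I will show that $K_i^*=\cone(\{a_1,\dots,a_p\})$. The inclusion $\supseteq$ is immediate: for $y\in K_i$ we have $\langle a_k,y\rangle\ge 0$, and nonnegative combinations preserve this inequality. For $\subseteq$, let $C:=\cone(\{a_1,\dots,a_p\})$. This set is closed because it is finitely generated; I will cite this standard fact, which follows from Carath\'eodory's theorem and compactness of simplices. Suppose $\psi\in K_i^*\setminus C$. Separating the point $\psi$ from the closed convex cone $C$ gives a vector $y$ with $\langle a_k,y\rangle\ge 0$ for all $k$ and $\langle\psi,y\rangle<0$. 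The first condition means $y\in K_i$, so the second contradicts $\psi\in K_i^*$. This is Farkas' lemma, and I would simply invoke it.

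\textbf{Step 2: remove zeros and handle degenerate cases.} Let $\Psi_i$ be the set of nonzero rows $a_k$, with duplicates removed. Discarding zero generators does not change the conic hull, so $K_i^*=\cone(\Psi_i)$ and $\Psi_i\subseteq K_i^*\setminus\{0\}$. One degenerate case needs care: $K_i=\mathbb{R}^{d_i}$ gives $K_i^*=\{0\}$. Here I take $\Psi_i=\emptyset$ with the convention $\cone(\emptyset)=\{0\}$, and I note that then no admissible weights exist. Otherwise $\Psi_i$ is nonempty, so $m\ge 1$.

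\textbf{Step 3: normalize.} To match~\eqref{eq:psi_unit_norm}, I rescale each generator to unit Euclidean norm. Positive scaling leaves the conic hull unchanged, so we may take $\Psi_i\subseteq (K_i^*\setminus\{0\})\cap S^{d_i-1}$.

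\textbf{Main obstacle.} The only nontrivial step is the inclusion $K_i^*\subseteq\cone(\{a_k\})$ in Step 1. It depends on $\cone(\{a_k\})$ being closed, so that the separation argument applies. I will cite Farkas' lemma, or equivalently Minkowski--Weyl, for this rather than reprove it. If ``polyhedral'' is instead defined as finitely generated, $K_i=\cone(\{g_1,\dots,g_q\})$, the argument changes slightly. In that case $K_i^*=\{\psi:\langle g_j,\psi\rangle\ge 0 \text{ for all } j\}$ is an intersection of finitely many halfspaces, hence polyhedral. Weyl's theorem then makes it finitely generated, which again yields $\Psi_i$.
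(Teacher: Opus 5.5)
Your proof is correct, and your main line differs from the paper's. The paper argues in two lines: since $K_i$ is polyhedral, $K_i^*$ is polyhedral, and Minkowski--Weyl then makes $K_i^*$ finitely generated. That is exactly your fallback argument for the case where ``polyhedral'' means finitely generated.

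Your main argument instead uses Farkas' lemma once, in the separation form, to identify the generators explicitly: $K_i^*=\cone(\{a_1,\dots,a_p\})$, the rows of the inequality matrix. This makes the result constructive. It tells you how to actually build the candidate class $\Psi$ fed to Algorithm~\ref{alg:outer_weight}: take the nonzero constraint normals of $K_i$ and normalize them. The paper's route only asserts that such a set exists, and it leaves implicit the step ``dual of polyhedral is polyhedral,'' which itself rests on Minkowski--Weyl.

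You also handle two details the paper's proof passes over. First, you discard a possible zero generator, so that $\Psi_i\subseteq K_i^*\setminus\{0\}$ genuinely holds. Second, you treat the degenerate case $K_i=\mathbb{R}^{d_i}$. There $K_i^*=\{0\}$ and the only valid choice is $m=0$ with $\cone(\emptyset)=\{0\}$, which is consistent with the paper's definition of the conic hull as finite nonnegative combinations. In the paper's setting this case is excluded anyway, because~\eqref{eq:obj_psi} posits some $\psi_i^{\text{obj}}\in K_i^*\setminus\{0\}$.

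The paper's route is shorter and does not require fixing an inequality representation of $K_i$. Your normalization step is not needed for the proposition as stated, but it is harmless and matches how the candidate class is used later.
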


\begin{proof}
Since \(K_i\) is polyhedral, its dual cone \(K_i^*\) is also polyhedral. By the Minkowski--Weyl theorem~\cite{ROC15}, every polyhedral cone is finitely generated. Hence there exists a finite set
$\Psi_i=\{\psi_i^1,\dots,\psi_i^{m}\}\subseteq K_i^*\setminus\{0\}$
such that
$K_i^*=\cone(\Psi_i)$.
\end{proof}

Assumption~\ref{ass:standing_algorithms}.\ref{ass:standing_2} and Proposition~\ref{prop:finite_candidate_class} imply that the algorithm we develop can be restricted to a finite candidate class of scalarizations obtained by normalizing the finite generating set of \(K_i^*\).

The following proposition shows that every scalarization induced by a dual-cone vector is order-preserving with respect to the cone-induced preference relation.

\begin{proposition}[Monotonicity of scalarizations under the cone order] \label{prop:cone_order_scalar_monotone}
Consider the game \(G\) from~\eqref{eq:Vector_valued_Game}, and fix a player \(i \in N\). Let \(K_i \subseteq \mathbb{R}^{d_i}\) be player \(i\)'s preference cone from \eqref{eq:Kidef}, and let \(K_i^*\) be its dual cone. If \(a,b\in\mathbb{R}^{d_i}\) satisfy \(a \le_{K_i} b\), then for every \(\psi_i \in K_i^*\) we have
$\langle \psi_i, a\rangle \le \langle \psi_i, b\rangle$.
Equivalently, for every \(\psi_i \in K_i^*\), the map \(z \mapsto \langle \psi_i, z\rangle\) is order-preserving with respect to \(\le_{K_i}\).
\end{proposition}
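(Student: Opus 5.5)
The plan is to unfold the two definitions involved and use linearity of the inner product. By \eqref{eq:Kidef}, the hypothesis \(a \le_{K_i} b\) means exactly that \(y := b - a \in K_i\). By the definition of the Euclidean dual cone, any \(\psi_i \in K_i^*\) satisfies \(\langle \psi_i, y\rangle \ge 0\) for every \(y \in K_i\). In particular this holds for \(y = b-a\).

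The key step is then to expand with bilinearity:
\[
\langle \psi_i, b\rangle - \langle \psi_i, a\rangle = \langle \psi_i, b - a\rangle \ge 0,
\]
which rearranges to \(\langle \psi_i, a\rangle \le \langle \psi_i, b\rangle\).

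For the ``equivalently'' clause, I would observe that a map \(\phi\) is order-preserving with respect to \(\le_{K_i}\) precisely when \(a \le_{K_i} b\) implies \(\phi(a) \le \phi(b)\). Applying the displayed inequality with \(\phi(z) = \langle \psi_i, z\rangle\) gives this at once, so the two formulations are literally the same statement.

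There is no real obstacle here. The only point worth a remark is that the argument uses neither closedness, polyhedrality, nor \(\psi_i \neq 0\); it holds for all of \(K_i^*\), including \(\psi_i = 0\), where the conclusion is trivial. I would also note, without proof, that the strict version (\(a <_{K_i} b\) implies strict inequality) would need \(\psi_i \neq 0\) together with \(b - a \in \operatorname{int}(K_i)\). That refinement is not claimed, so I will not pursue it.
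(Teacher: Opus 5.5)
Your proof is correct and follows the paper's argument exactly: rewrite $a \le_{K_i} b$ as $b-a \in K_i$, apply the defining inequality of $K_i^*$ to obtain $\langle \psi_i, b-a\rangle \ge 0$, and rearrange by linearity. Your side remarks are also accurate: the argument needs neither closedness, polyhedrality, nor $\psi_i \neq 0$.
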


\begin{proof}
If \(a \le_{K_i} b\), then \(b-a \in K_i\). For any \(\psi\in ~K_i^*\), by definition of the dual cone,
\(\langle \psi, b-a\rangle \ge 0\), i.e., \(\langle \psi, b\rangle \ge ~\langle \psi, a\rangle\).
\end{proof}


\begin{remark}
We focus on a fixed focal player \(i \in N\) and, throughout the rest of this section, we drop the subscript \(i\) from all quantities associated with that player. The remaining players are not modeled explicitly and may update their behavior arbitrarily, provided the resulting environment remains responsive rather than fully adversarial,
as described in Remark~\ref{rem:responsive_environment}.
\end{remark}

We next introduce the algorithm we use.

\subsection{Block Protocol}

We use a bi-level learning architecture. 
The outer layer operates on a slower timescale, and it selects 
a weight vector \(\psi\in~\Psi\) to deploy in the inner
layer, where \(\Psi~=~\{~\psi^1,~\dots,~\psi^m\}\) is the finite candidate set from Proposition~\ref{prop:cone_order_scalar_monotone}.
The inner layer operates on a faster timescale, and it 
selects actions using the currently deployed scalarization. For each candidate \(\psi^j \in \Psi\), the player maintains an associated mixed action distribution.
Collecting these 
mixed action distributions
gives a matrix of policies \(Q\in\Delta(\mathcal{A})^m\subset\mathbb{R}^{m\times|\mathcal{A}|}\). 
We use~$Q[j, \cdot]$ to denote the~$j^{th}$ row of~$Q$.
For each~$j \in [m]$, 
the row \(Q[j,\cdot]\in~\Delta(\mathcal{A})\) 
is the policy used when deploying \(\psi^j\). The outer layer maintains a distribution over candidate scalarizations,
and this distribution is an element of~\(\Delta(\Psi)\). 
The player changes its policies based on the scalar
feedback it observes, and we use~$Q[j, \cdot]_t$ to denote
the player's policy for~$\psi^j$ at round~$t$. 

The outer layer operates in blocks of rounds of the game,
in the sense that the player fixes 
a vector
\(\psi\) and takes actions according to the fixed scalarized payoff via \(\psi\) 
for a fixed number of rounds. Fix a block length \(L\in\mathbb{N}\) and let
\(h:=~\lceil T/L\rceil\) denote the number of blocks. 
For each block \(k\in[h]\), define
\[
s_0(k):=(k-1)L+1,\qquad s_1(k):=\min\{kL,T\},
\]
\[
\mathcal{I}_k:=[s_0(k),s_1(k)]\cap\mathbb{Z},
\]
where~$s_0(k)$ is the starting index of block~$k$, ~$s_1(k)$ is the last index of block~$k$, and $\mathcal{I}_k$ is the set of indices $t$ of the rounds within block~$k$.
That is, the first block consists of rounds 
with indices~$t \in \{s_0(1), s_0(1) + 1, \ldots, s_1(1)\}$,
the second block consists of rounds
with indices $t \in \{s_0(2), s_0(2) + 1, \ldots, s_1(2)\},$ etc.

At the start of block \(k\), the player's outer loop samples \(J_k\sim p_k\),
where~$p_k \in \Delta(\Psi)$ denotes the outer loop's distribution over~$\Psi$
during block~$k$. 
Then, its
inner loop 
deploys \(\psi^{J_k}\) for all \(t\in\mathcal{I}_k\). 
The player samples actions \(a_t\sim Q[J_k,\cdot]_t\), 
where~$Q[J_k,\cdot]_t$ is 
the distribution at round $t$ over actions associated with the candidate scalarization with index~$J_k$.
It updates only the  row \(Q[J_k,\cdot]_t\) using the shaping reward
\begin{equation}
\label{eq:inner_shaping_reward}
r_t^{(J_k)}:=\langle \psi^{J_k},u_t\rangle,
\qquad t\in\mathcal{I}_k.
\end{equation}
This update is performed
as follows. 
We form the IX-stabilized gradient estimate\footnote{
\ifbool{techreport}{
  Reviews of IX-stabilized gradient estimates and online mirror descent are provided in Appendix A.
}{
  Reviews of IX-stabilized gradient estimates and online mirror descent are provided in~\cite[Appendix A]{techreport}.
}
}
\(\widehat g_t\) (see~\cite{Koc14,NG15,LC20,BC12}) via 
\begin{equation}
\label{eq:inner_ix_estimator}
\widehat g_t[a]
:=
-\frac{\mathbb{I}\{a_t=a\}}{Q[J_k,\cdot](a_t)+\gamma_q}\, r_t^{(J_k)},
\qquad a\in\mathcal A,
\end{equation}
where \(\mathbb{I}\{\cdot\}\) denotes the indicator function, and \(\gamma_q>0\) is the implicit-exploration parameter used to stabilize the importance-weighted estimator. We then apply one 
online mirror descent
(OMD) step~\cite{shalev-shwartz_omd_hazan_bubeck} to the active policy row, 
and it takes the form
\begin{equation}
\label{eq:inner_omd_update_block}
Q[J_k,\cdot]_{t+1}
\in
\argmin_{q\in\Delta(\mathcal A)}
\left\{
\langle q,\widehat g_t\rangle + \tfrac{1}{\eta_q} D_{R_q}\big(q\|Q[J_k,\cdot]_t\big)
\right\}, 
\end{equation}
where \(D_{R_q}(\cdot\|\cdot)\) denotes the Bregman divergence associated with the regularizer \(R_q\) on the simplex \(\Delta(\mathcal A)\). 
All rows \(Q[j,\cdot]\) for \(j\neq J_k\) are left unchanged during
time block \(k\).

The outer layer is evaluated using the objective weight-vector \(\psi^{\text{obj}}\) 
from~\eqref{eq:obj_psi}
via the block-average reward
\begin{equation}
\label{eq:block_objective_reward}
r_k^{{\text{obj}}} := \frac{1}{|\mathcal{I}_k|}\sum_{t\in\mathcal{I}_k}\langle \psi^{{\text{obj}}},u_t\rangle.
\end{equation}
Using the IX estimator on \(\Delta([m])\) with the preference vector index~$J_k$
and feedback \(r_k^{{\text{obj}}}\), we obtain \(\widehat g_k^{(P)}\) via
\begin{equation}
\label{eq:outer_ix_estimator}
\widehat g_k^{(P)}[j]
:=
-\frac{\mathbb{I}\{J_k=j\}}{p_k(j)+\gamma_p}\, r_k^{{\text{obj}}},
\qquad j\in[m],
\end{equation}
where \(\gamma_p>0\) is the implicit-exploration parameter used to stabilize the importance-weighted estimator. 
The quantity \(\widehat g_k^{(P)}\) is the estimated loss vector for the outer learner, and it converts the scalar feedback \(r_k^{{\text{obj}}}\) from the sampled preference index into a full vector that is used in the OMD update and outer regret analysis.
We then apply one OMD step to update \(p_k\) using
\begin{equation}
\label{eq:outer_omd_update_block}
p_{k+1}
\in
\argmin_{p\in\Delta([m])}
\left\{
\langle p,\widehat g_k^{(P)}\rangle + \tfrac{1}{\eta_p} D_{R_p}(p\|p_k)
\right\},
\end{equation}
 where \(D_{R_p}(\cdot\|\cdot)\) denotes the Bregman divergence associated with the regularizer \(R_p\) on the simplex \(\Delta(\Psi)\).

\FloatBarrier

\begin{algorithm}[H]
\caption{\textsc{OuterAlg}}
\label{alg:outer_weight}
\begin{algorithmic}[1]
\Require 
Candidate scalarizations \(\Psi=\{\psi^1,\dots,\psi^m\}\), time horizon \(T\), block size \(L\), objective weight vector \(\psi^{{\text{obj}}}\), step sizes \(\eta_p,\eta_q>0\), implicit-exploration parameters \(\gamma_p,\gamma_q\ge 0\), and regularizers \(R_p,R_q\)

\State \(h\gets \lceil T/L\rceil\); initialize \(p_1\in\Delta([m])\), \(Q\in\Delta(\mathcal A)^m\) with no zero entries
\For{$k=1,\dots,h$}
  \State \(\mathcal I_k \gets [(k-1)L+1,\min\{kL,T\}]\cap\mathbb Z\)
  \State sample \(J_k\sim p_k\)
  \State \((Q,\ r_k^{{\text{obj}}})\gets \textsc{InnerAlg}(\)
\Statex \hspace{2.9cm}\(J_k,\psi^{J_k},\psi^{{\text{obj}}},\mathcal I_k,Q,\eta_q,\gamma_q,R_q)\)
  \State Compute \(\widehat g_k^{(P)}\) with \eqref{eq:outer_ix_estimator} using \((p_k,J_k,r_k^{{\text{obj}}},\gamma_p)\)
  \State Compute \(p_{k+1}\) using \eqref{eq:outer_omd_update_block} 
\EndFor
\State \Return \(\{p_k\}_{k=1}^{h+1}\), \(Q\)
\end{algorithmic}
\end{algorithm}


\begin{algorithm}[H]
\caption{\textsc{InnerAlg}}
\label{alg:inner_action}
\begin{algorithmic}[1]
\Require Active index \(J\in[m]\), deployed scalarization \(\psi^{J}\), objective weight-vector \(\psi^{{\text{obj}}}\), 
block index \(\mathcal I\subseteq[T]\), matrix \(Q~\in~\Delta(\mathcal A)^m\), step size \(\eta_q>0\), implicit-exploration parameter \(\gamma_q\ge 0\), and regularizer \(R_q\)
\State \(S^{{\text{obj}}}\gets 0\)
\For{$t\in\mathcal I$}
  \State sample \(a_t\sim Q[J,\cdot]\); observe \(u_t\) from~\eqref{eq:stage_payoff_i}  
  \State \(S^{{\text{obj}}}\gets S^{{\text{obj}}}+\langle \psi^{{\text{obj}}},u_t\rangle\)
  \State \(r_t\gets \langle \psi^{J},u_t\rangle\)
  \State Compute \(\widehat g_t\) by \eqref{eq:inner_ix_estimator} using \((Q[J,\cdot],a_t,r_t,\gamma_q)\)
  \State Compute \(Q[J,\cdot]_{t+1}\) using~\eqref{eq:inner_omd_update_block} 
\EndFor
\State \(r^{{\text{obj}}}\gets S^{{\text{obj}}}/|\mathcal I|\)
\State \Return \(Q\), \(r^{{\text{obj}}}\)
\end{algorithmic}
\end{algorithm}

\FloatBarrier

Algorithms~\ref{alg:outer_weight} and~\ref{alg:inner_action} implement the block protocol described above. Algorithm~\ref{alg:outer_weight} updates the block-level distribution \(p_k\in~\Delta([m])\) using the block feedback \(r_k^{\text{obj}}\), and calls Algorithm~\ref{alg:inner_action}, which runs within the block and updates only the active row \(Q[J_k,\cdot] \in \Delta(\mathcal A)\) using shaping rewards induced by the deployed scalarization \(\psi^{J_k}\).


%% file: sec/5_regretbounds.tex
\section{Regret Analysis}
\label{sec:theory}
This section solves Problem~\ref{prob:regret_theorem} and
derives finite-time regret bounds for the bi-level protocol
in Algorithms~\ref{alg:outer_weight} and~\ref{alg:inner_action}. 

\subsection{Inner-Layer Regret}
\label{subsec:inner_regret_bounds}



Fix a block \(k\) and its active index \(J_k\in[m]\). Conditioned on choosing \(J_k\), the inner algorithm
faces a sequence of bandit linear losses over the simplex \(\Delta(\mathcal A)\) induced by the deployed scalarization \(\psi^{J_k}\). 
We compare the inner algorithm to the best fixed mixed action in hindsight, namely 
\begin{equation}
\label{eq:inner_block_best_hindsight}
q_k^\star \in \argmin_{q\in\Delta(\mathcal A)}
\sum_{t\in\mathcal I_k} \big\langle q,\ \widehat g_t\big\rangle,
\end{equation}
i.e., the single distribution that would have minimized the accumulated estimated loss if it had been played throughout the block. The following bound controls the inner algorithm's performance relative to this hindsight benchmark.


\begin{theorem}[Inner block regret bound]
\label{thm:inner_block_regret}
Suppose Assumption~\ref{ass:standing_algorithms} holds, fix \(\gamma_q>0\), and suppose Algorithm~2 uses the negative entropy regularizer
$R_q(q)~=~\sum_{a\in\mathcal A} q(a)\log q(a)$
for the inner OMD update. Then, for the hindsight-optimal comparator \(q_k^\star\) in \eqref{eq:inner_block_best_hindsight}, and with \(T_k:=|\mathcal I_k|\), we have
\begin{equation}
\label{eq:inner_block_regret}
\sum_{t\in\mathcal I_k}\big\langle Q_t[J_k,\cdot]-q_k^\star,\widehat g_t\big\rangle
\le
\frac{\sqrt{d}\,U}{\gamma_q}\sqrt{2T_k\log|\mathcal A|}.
\end{equation}
\end{theorem}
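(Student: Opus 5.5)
The plan is to treat the inner layer within block \(k\) as a standard online linear optimization over \(\Delta(\mathcal A)\), with the deterministic (given the history) loss vectors \(\widehat g_t\), and apply the generic OMD regret bound with the negative-entropy regularizer. For entropic OMD (exponential weights) on the simplex with step size \(\eta_q\), the standard bound for any comparator \(q\in\Delta(\mathcal A)\) is
\begin{equation}
\sum_{t\in\mathcal I_k}\big\langle Q_t[J_k,\cdot]-q,\widehat g_t\big\rangle
\le
\frac{D_{R_q}(q\|Q_{s_0(k)}[J_k,\cdot])}{\eta_q}+\frac{\eta_q}{2}\sum_{t\in\mathcal I_k}\|\widehat g_t\|_\infty^2 ,
\end{equation}
which follows from the 1-strong convexity of negative entropy with respect to \(\|\cdot\|_1\) (Pinsker) and the dual norm \(\|\cdot\|_\infty\). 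I would invoke this bound as reviewed in the appendix rather than rederive it, applying it with \(q=q_k^\star\). Only the active row is updated during block \(k\), so the iterates \(Q_t[J_k,\cdot]\), \(t\in\mathcal I_k\), form a genuine OMD trajectory.

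Next I bound each term. For the divergence term, I would argue that it is at most \(\log|\mathcal A|\). This holds exactly if the row starts at the uniform distribution. Otherwise I would appeal to the standard treatment in which the entropic range \(\max R_q-\min R_q=\log|\mathcal A|\) bounds the regularizer term, and I would flag that a nonuniform initialization may require a restart or an additional assumption. For the gradient term, note that \(\widehat g_t\) has a single nonzero entry. Its magnitude is \(|r_t^{(J_k)}|/(Q(a_t)+\gamma_q)\le |r_t^{(J_k)}|/\gamma_q\). By Cauchy--Schwarz with \(\|\psi^{J_k}\|_2=1\) (Assumption~\ref{ass:standing_algorithms}.\ref{ass:standing_3}) and \(\|u_t\|_2\le\sqrt{d}\|u_t\|_\infty\le \sqrt d\,U\) (Assumption~\ref{ass:standing_algorithms}.\ref{ass:standing_1}), we get \(|r_t^{(J_k)}|\le \sqrt d\,U\). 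Hence \(\|\widehat g_t\|_\infty\le G:=\sqrt d\,U/\gamma_q\) deterministically. This deterministic bound is what makes the IX stabilization useful here.

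Combining the two bounds gives regret at most \(\log|\mathcal A|/\eta_q+\eta_q T_k G^2/2\). Choosing \(\eta_q=\sqrt{2\log|\mathcal A|/(T_k G^2)}\) then yields \(G\sqrt{2T_k\log|\mathcal A|}=\frac{\sqrt d\,U}{\gamma_q}\sqrt{2T_k\log|\mathcal A|}\), which matches \eqref{eq:inner_block_regret}.

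The main obstacle is purely bookkeeping. The theorem implicitly requires this tuned \(\eta_q\), and a fixed \(\eta_q\) across blocks of length \(L\) handles this except for the possibly shorter final block, where \(T_k\le L\) only costs a constant. It also requires the initial divergence bound discussed above. I would state both choices explicitly in the proof.
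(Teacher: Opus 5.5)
This is the paper's proof: the standard OMD bound with negative entropy ($\rho=1$, dual norm $\|\cdot\|_\infty$), the divergence term bounded by $\log|\mathcal A|$, $\|\widehat g_t\|_\infty\le\sqrt d\,U/\gamma_q$ via Cauchy--Schwarz with $\|\psi^{J_k}\|_2=1$, and the tuned step size. Your two caveats are real, and the paper passes over both: it asserts a Bregman diameter of at most $\log|\mathcal A|$, but $\sup_z D_{R_q}(z\|w)=-\log\min_a w(a)$, so this holds only for a uniform reference point, and it silently uses the tuned $\eta_q$. The clean fix is therefore to restart the active row at uniform and tune $\eta_q$ to $T_k$ on each block; your entropic-range fallback does not cover a non-uniform OMD start, and an $L$-tuned step on a short final block gives a bound of order $\sqrt{L}$ rather than $\sqrt{T_k}$, which is more than a constant-factor loss.
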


\begin{proof}
\ifbool{techreport}{
  See Appendix~\ref{app:inner_block_regret}.
}{
  See~\cite[Appendix B]{techreport}.
}
\end{proof}

Theorem~\ref{thm:inner_block_regret} shows that, over each block, the inner algorithm incurs only sublinear regret relative to the best fixed mixed action chosen in hindsight for the deployed scalarization in that block. Thus, when the outer layer fixes a scalarization, the inner layer can learn to act nearly as well as the best fixed action rule for that scalarized objective.

\subsection{Outer-Layer Regret}
\label{subsec:outer_regret_bounds}

The outer update 
in Algorithm~\ref{alg:outer_weight}
runs over \(\Delta([m])\) for \(h\) rounds using the block-average feedback
\[
r_k^\star := \frac{1}{|\mathcal I_k|}\sum_{t\in\mathcal I_k}\langle \psi^\star,u_t\rangle,
\]
the IX estimator in~\eqref{eq:outer_ix_estimator} with parameters \((q_t,a_t,r_t,\gamma)=~(p_k,J_k,r_k^\star,\gamma_p)\), and one OMD step in~\eqref{eq:outer_omd_update_block} with mirror map \(R_p\).
The following result bounds the regret of the outer algorithm,
namely Algorithm~\ref{alg:outer_weight}, under these conditions. 


\begin{theorem}[Outer block regret bound]
\label{thm:outer_block_regret}
Suppose Assumption~\ref{ass:standing_algorithms} holds, fix \(\gamma_p>0\), and suppose Algorithm~1 uses the negative entropy regularizer
$R_p(p)=\sum_{j=1}^m p(j)\log p(j)$
for the outer OMD update. Then, for any comparator \(p\in\Delta([m])\),
\begin{equation}
\label{eq:outer_regret_general}
\sum_{k=1}^h \big\langle p_k-p,\widehat g_k^{(P)}\big\rangle
\le
\frac{\sqrt{d}\,U}{\gamma_p}\sqrt{2h\log m}.
\end{equation}
\end{theorem}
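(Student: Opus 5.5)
The plan mirrors the inner-layer argument behind Theorem~\ref{thm:inner_block_regret}, now applied to the outer learner on \(\Delta([m])\) over \(h\) blocks. We treat \(\{\widehat g_k^{(P)}\}_{k=1}^h\) as an arbitrary sequence of linear losses fed to entropic OMD. The Bregman divergence of \(R_p\) is then the KL divergence. The standard OMD analysis, via the three-point identity and the \(1\)-strong convexity of negative entropy with respect to \(\|\cdot\|_1\) on the simplex (Pinsker), gives for any comparator \(p\)
\begin{equation}
\sum_{k=1}^h \big\langle p_k-p,\widehat g_k^{(P)}\big\rangle
\le
\frac{D_{R_p}(p\|p_1)}{\eta_p}+\frac{\eta_p}{2}\sum_{k=1}^h \big\|\widehat g_k^{(P)}\big\|_\infty^2 .
\end{equation}
We take \(p_1\) uniform, as Algorithm~\ref{alg:outer_weight} permits. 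Then \(D_{R_p}(p\|p_1)\le\log m\) for every \(p\in\Delta([m])\).

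Next we bound the estimated losses deterministically. By Cauchy--Schwarz, Assumption~\ref{ass:standing_algorithms}.\ref{ass:standing_1}, and the unit-norm normalization \(\|\psi^{\text{obj}}\|_2=1\), each stage value satisfies \(|\langle\psi^{\text{obj}},u_t\rangle|\le\|u_t\|_2\le\sqrt{d}\,\|u_t\|_\infty\le\sqrt{d}\,U\). Averaging over the block gives \(|r_k^{\text{obj}}|\le\sqrt d\,U\).

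Since \(\widehat g_k^{(P)}\) has a single nonzero coordinate with denominator \(p_k(J_k)+\gamma_p\ge\gamma_p\), we get \(\|\widehat g_k^{(P)}\|_\infty\le \sqrt d\,U/\gamma_p\) for every \(k\). This is exactly where \(\gamma_p>0\) is essential. Without it the importance-weighted loss is unbounded and only an expected bound would be available.

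Substituting yields
\begin{equation}
\sum_{k=1}^h \big\langle p_k-p,\widehat g_k^{(P)}\big\rangle \le \frac{\log m}{\eta_p}+\frac{\eta_p h dU^2}{2\gamma_p^2}.
\end{equation}
The choice \(\eta_p=\frac{\gamma_p}{\sqrt d\,U}\sqrt{2\log m/h}\) balances the two terms and gives \(\frac{\sqrt d\,U}{\gamma_p}\sqrt{2h\log m}\), as claimed.

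The main obstacle is the first displayed inequality: stating the entropic OMD regret lemma cleanly with the \(\tfrac{\eta}{2}\|g\|_\infty^2\) local term, and in particular justifying it when losses may be negative. Here \(\widehat g_k^{(P)}\) has the sign of \(-r_k^{\text{obj}}\), which need not be nonnegative. For that reason I would use the general strong-convexity/dual-norm form of the bound rather than the exponential-weights inequality \(e^{-x}\le 1-x+x^2\), which needs nonnegative losses. The cited step size is also an implicit hypothesis that the proof should make explicit.
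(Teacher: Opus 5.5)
Your proposal is correct and matches the paper's proof step for step: the entropic OMD bound with $\rho=1$ and dual norm $\|\cdot\|_\infty$, the deterministic bound $\|\widehat g_k^{(P)}\|_\infty\le \sqrt d\,U/\gamma_p$ via Cauchy--Schwarz with $\gamma_p>0$, and the balancing step size. One point in your favor: bounding $D_{R_p}(p\|p_1)\le\log m$ with uniform $p_1$ is the correct way to get the $\log m$ term, whereas the paper's ``Bregman diameter'' $\sup_{z,w}D_{R_p}(z\|w)$ is infinite for KL on the simplex, so uniform initialization and the tuned $\eta_p$ you state are genuinely implicit hypotheses of the theorem.
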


\begin{proof}
\ifbool{techreport}{
  See Appendix~\ref{app:outer_block_regret}.
}{
  See~\cite[Appendix C]{techreport}.
}
\end{proof}

Theorem~\ref{thm:outer_block_regret} shows that the outer algorithm incurs only sublinear regret over blocks relative to any fixed distribution over candidate scalarizations chosen in hindsight. Thus, at the slower timescale, the outer layer can learn to select scalarizations nearly as well as the best fixed mixture of candidate weights for the objective feedback observed across blocks.

\subsection{Regret of the Bilevel Algorithm}
\label{subsec:Bilevel_regret_bounds}

We define the regret of the bilevel algorithm by
\begin{equation}
\label{eq:bilevel_regret_def}
\mathcal R_T^{\mathrm{bi}}
:=
\sum_{k=1}^h \big\langle p_k-p^\star,\widehat g_k^{(P)}\big\rangle
+
\sum_{k=1}^h \sum_{t\in\mathcal I_k}\big\langle Q_t[J_k,\cdot]-q_k^\star,\widehat g_t\big\rangle.
\end{equation}
Here, \(p^\star\) is the best fixed distribution over candidate scalarizations chosen in hindsight, and \(q_k^\star\) is the best fixed mixed action chosen in hindsight on block \(\mathcal I_k\) for the scalarization deployed on that block. The following theorem establishes the regret bound for the full bilevel algorithm.

\begin{theorem}[Regret of the bilevel algorithm]
\label{thm:bilevel_regret}
Suppose all conditions of Theorems~\ref{thm:inner_block_regret}
and~\ref{thm:outer_block_regret} hold. Then the regret of the bilevel algorithm satisfies
\begin{equation}
\label{eq:bilevel_regret_bound}
\mathcal R_T^{\mathrm{bi}}
\le
\sqrt d\,U\left(
\frac{\sqrt{2h\log m}}{\gamma_p}
+
\frac{\sqrt{2hT\log|\mathcal A|}}{\gamma_q}
\right).
\end{equation}
\end{theorem}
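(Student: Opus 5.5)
The regret $\mathcal R_T^{\mathrm{bi}}$ in \eqref{eq:bilevel_regret_def} is a sum of two separately controlled terms, so I will bound each with the earlier theorems and then combine. The outer term $\sum_{k=1}^h \langle p_k-p^\star,\widehat g_k^{(P)}\rangle$ is exactly the left side of \eqref{eq:outer_regret_general} with comparator $p=p^\star\in\Delta([m])$. Theorem~\ref{thm:outer_block_regret} holds for every comparator, so it holds for this hindsight-optimal one too. This gives the first summand $\sqrt d\,U\sqrt{2h\log m}/\gamma_p$ directly.

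For the inner term, I apply Theorem~\ref{thm:inner_block_regret} block by block. For each $k\in[h]$, the inner sum over $\mathcal I_k$ with comparator $q_k^\star$ from \eqref{eq:inner_block_best_hindsight} is bounded by $\frac{\sqrt d\,U}{\gamma_q}\sqrt{2T_k\log|\mathcal A|}$. This needs some care. Only the row $Q[J_k,\cdot]$ is updated in block $k$, and it starts the block from whatever state it reached in earlier blocks where $J_k$ was active rather than from uniform. I will check that the per-block bound does not depend on the initialization beyond what Theorem~\ref{thm:inner_block_regret} already accounts for, namely that its $\log|\mathcal A|$ term covers the divergence from the starting row, or I will simply take the theorem as stated with its hypotheses holding blockwise. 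Summing over blocks gives
\begin{equation}
\sum_{k=1}^h\sum_{t\in\mathcal I_k}\langle Q_t[J_k,\cdot]-q_k^\star,\widehat g_t\rangle
\le \frac{\sqrt d\,U\sqrt{2\log|\mathcal A|}}{\gamma_q}\sum_{k=1}^h\sqrt{T_k}.
\end{equation}

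The remaining step is to control $\sum_{k=1}^h\sqrt{T_k}$. The cleanest route is the crude bound $T_k\le T$ for every $k$, which gives $\sum_k\sqrt{T_k}\le h\sqrt T$. That is too weak, since it gives $h\sqrt{T}$ rather than $\sqrt{hT}$. Instead I use Cauchy--Schwarz together with $\sum_k T_k = T$, because the blocks partition $[T]$:
\begin{equation}
\sum_{k=1}^h \sqrt{T_k}\le \sqrt{h}\,\Bigl(\sum_{k=1}^h T_k\Bigr)^{1/2}=\sqrt{hT}.
\end{equation}
This yields the second summand $\sqrt d\,U\sqrt{2hT\log|\mathcal A|}/\gamma_q$, and adding the two bounds gives \eqref{eq:bilevel_regret_bound}.

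The main obstacle is the justification of the blockwise application of Theorem~\ref{thm:inner_block_regret}. The rows carry state across blocks, and $J_k$ is random. However, both theorems are deterministic, pathwise statements about the realized estimates $\widehat g_t$, so conditioning is not needed. The only thing to verify is that the inner theorem's bound holds for an arbitrary full-support starting row. If it requires a uniform start, I would instead group the blocks by active index $j$ and treat each row's concatenated blocks as one OMD run. That route would need a per-block comparator argument, so I expect to rely on the theorem as stated.
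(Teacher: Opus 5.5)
Your proposal is correct and follows the paper's proof. The paper bounds the outer sum by Theorem~\ref{thm:outer_block_regret} and each block's inner sum by Theorem~\ref{thm:inner_block_regret}, then combines them using Cauchy--Schwarz with $\sum_{k}|\mathcal I_k|=T$; its only extra step is to first rewrite $\mathcal R_T^{\mathrm{bi}}$ as a joint maximum over comparators and split it into separate maxima, which your direct substitution of $p^\star$ and $q_k^\star$ makes unnecessary. The paper also applies Theorem~\ref{thm:inner_block_regret} blockwise as stated without addressing warm-started rows; your concern is fair as a criticism of the appendix lemma, whose $\log|\mathcal A|$ diameter bound implicitly assumes a uniform start, but it does not affect deriving this theorem from the two earlier ones.
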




\begin{proof}
\ifbool{techreport}{
  See Appendix~\ref{app:bilevel_regret}.
}{
  See~\cite[Appendix D]{techreport}.
}
\end{proof}

Theorem~\ref{thm:bilevel_regret} states that the total error of the bilevel procedure decomposes into two sublinear components: an outer regret term that measures how well the algorithm chooses among deployed scalarizations across blocks, and an inner regret term that measures how well the algorithm learns actions within each deployed block. The overall procedure performs nearly as well as a benchmark that, up to sublinear error, combines the best fixed outer weighting strategy with the best fixed within-block action choices under the deployed scalarizations. In this sense, the theorem formalizes the
fact that the bilevel architecture can simultaneously learn which scalarizations are useful and 
learn how to act under them.




%% file: sec/6_Empirical_Validation.tex
\section{Simulation Results}
\label{sec:experiments}

This section presents simulation results for a vector-valued
extension of the classic Bach or Stravinsky game~\cite{OR94}. 
To demonstrate the performance of Algorithms~\ref{alg:outer_weight} and~\ref{alg:inner_action} we compare with an exponential weights algorithm with implicit exploration (Exp-IX) from~\cite{NG15}.

\subsection{Problem Setup}
We consider a repeated two-action game with actions \(\{B,S\}\) for both players and a \(4\)-dimensional vector payoff \(u(a,b)\in\mathbb{R}^4\) defined as 
\[
\begin{array}{c|cc}
 & B & S \\ \hline
B & (1,1,1,0) & (-1,1,1,-1) \\
S & (1,-1,-1,1) & (0,1,1,1)
\end{array}
\]
At round \(t\), the realized outcome is \(u_t=u(a_t,b_t)\). The focal player is evaluated using
\[
\psi^{{\text{obj}}}=\Bigl(\tfrac{\sqrt2}{2},\tfrac{\sqrt2}{2},0,0\Bigr),
\]
while the opponent updates using the weight vector
\[
\phi^{{\text{obj}}}=\Bigl(0,0,\tfrac{\sqrt2}{2},\tfrac{\sqrt2}{2}\Bigr).
\]
Thus, the focal evaluation signal is \(\langle \psi^{{\text{obj}}},u_t\rangle\), while the opponent learns from \(\langle \phi^{{\text{obj}}},u_t\rangle\).

Under these fixed choices of \(\psi^{{\text{obj}}}\) and \(\phi^{{\text{obj}}}\), the induced scalar payoff matrix is
\[
\begin{array}{c|cc}
 & B & S \\ \hline
B & (\sqrt2,\tfrac{\sqrt2}{2}) & (0,0) \\
S & (0,0) & (\tfrac{\sqrt2}{2},\sqrt2)
\end{array}
\]
which has two pure Nash equilibria, i.e., 
\(BB\) and \(SS\). Deployed weights can alter the induced incentives. For example, deploying
\[
\psi'=\Bigl(\tfrac{\sqrt2}{2},0,0,-\tfrac{\sqrt2}{2}\Bigr)
\]
inside the focal player's updates, while evaluation remains along \(\psi^{{\text{obj}}}\),
yields the scalar game
\[
\begin{array}{c|cc}
 & B & S \\ \hline
B & (\tfrac{\sqrt2}{2},\tfrac{\sqrt2}{2}) & (0,0) \\
S & (0,0) & (-\tfrac{\sqrt2}{2},\sqrt2)
\end{array}
\]
in which \(B\) is strictly dominant for the focal player and \(SS\) is no longer an equilibrium.


\subsection{Results and Comparisons to Baselines}
We evaluate our bi-level framework on the four dimensional game described above with $1,000$ independent runs that each consist of $10,000$ rounds.
We compare two scenarios. In Scenario 1 (Exp-IX vs. Exp-IX) both players use standard exponential weights with implicit exploration (Exp-IX)~\cite{NG15}, and in Scenario 2 (Bi-level vs Exp-IX) one player uses our bi-level architecture while the other uses Exp-IX.
The bi-level player maintains a distribution over 
\begin{multline}
\Psi = \Big\{(\tfrac{\sqrt2}{2},\tfrac{\sqrt2}{2},0,0),(\tfrac12,\tfrac12,\tfrac12,\tfrac12),
(\tfrac12,\tfrac12,-\tfrac12,-\tfrac12)\Big\}.
\end{multline}
The choice of a weight vector~$\psi \in \Psi$ 
is performed every $L=500$ rounds using Algorithm~\ref{alg:outer_weight} with a learning rate of $\eta_p=0.1$. 
The inner algorithm, Algorithm~\ref{alg:inner_action}, uses $\eta_q=0.1$. 
For both algorithms, we use an implicit exploration parameter of $\gamma_p=\gamma_q=0.2.$
In both scenarios, the baseline Exp-IX players use  $\gamma=0.2.$ 

Figure~\ref{fig:toy4d_bos_results} presents the distribution of equilibrium outcomes over the $1,000$ runs for both scenarios.
Each run's equilibrium is determined based on the majority action profile over the final $1,000$ steps.
Because the learning dynamics are stochastic, the realized action profile in a single run does not necessarily settle on the same outcome for every round.
We therefore classify each run by the equilibrium whose action profile appears most often over the final $1,000$ steps.
In Scenario 1 (Exp-IX vs. Exp-IX), convergence to the $BB$ and $SS$ equilibria 
happens with nearly the same frequency: 
53\% of runs end at~$BB$, while 47\% end at \(SS\).

In Scenario 2 (Bi-level vs Exp-IX) there is a strong asymmetry: $82\%$ of runs converge to $BB,$ $9\%$ of runs converge to $SS,$ and $9\%$ converge to neither.
This shift between the scenarios demonstrates that the bi-level player's adaptive scalarization successfully biases the coupled learning dynamics toward the $BB$ equilibrium, which yields a higher payoff under its true objective weight vector $\psi^{{\text{obj}}}~=~(\sqrt{2}/2,{\sqrt2}/2,0,0).$
The increase in non-convergent outcomes from 0\% to 9\% suggests that adaptive scalarization can sometimes delay convergence, though the overall benefit of steering toward the preferred equilibrium may outweigh this drawback.

Figure~\ref{fig:scenario_2} shows the mean scalarized reward trajectories for runs that converged to $BB$ in the top plot, and 
it shows the same quantity for 
runs that converged to $SS$ in the bottom plot, with the shaded regions corresponding to $\pm 1$ standard deviation. These trajectories are smoothed with a moving average over $300$ rounds. 
The gold and blue lines depict $\langle\psi^{{\text{obj}}},u_t\rangle$ and $\langle \phi^{{\text{obj}}},u_t\rangle,$ respectively.
The overall trend is that the bi-level algorithm earns a higher reward on trajectories that end in $BB$ and a lower reward on trajectories that end in $SS.$  
These trajectories demonstrate that adaptive scalarization enables the bi-level player to bias the coupled learning dynamics toward equilibria that are more favorable under its objective weight vector.

These results collectively illustrate the core mechanism behind our approach: in a payoff-responsive environment, the interaction trajectory depends on how the focal player evaluates outcomes inside its learning updates. By adapting the deployed scalarization, the focal player changes its own action updates, which in turn changes the opponent’s response and the induced equilibrium that emerges. As a result, the bi-level method can change its weights to steer play toward trajectories that yield higher payoff values under the fixed objective weight \(\psi^{{\text{obj}}}\).

\begin{figure}
    \centering
    \input{figures/BoS}
    \caption{Outcomes over \(1000\) runs (\(T=10^4\)). Scenario 1: Exp-IX vs.\ Exp-IX under fixed weights. Scenario 2: bi-level (Algorithms~\ref{alg:outer_weight}--\ref{alg:inner_action}) vs.\ Exp-IX.
    We see that when both players use Exp-IX, the runs split nearly evenly between the $BB$ and $SS$ equilibria. When the focal player instead uses our bi-level method, the outcome distribution shifts strongly towards $BB,$ showing that adaptive scalarization biases the learning dynamics toward the equilibrium preferred by the focal player.
    }
    \label{fig:toy4d_bos_results}
\end{figure}

\setlength{\figwidth}{8cm}
\setlength{\figheight}{4.2cm}


\begin{figure}
    \centering
    \input{figures/scenario2}
    \caption{Scalarized reward trajectories for Scenario 2 (Bi-level vs Exp-IX) by equilibrium type. The top plot shows $BB$ equilibrium runs and the bottom plot shows $SS$ equilibrium runs. The results were averaged over $1000$ runs and the shaded regions correspond to $\pm1$ standard deviation.    
    We see that among runs that converge to $BB,$ the bi-level player receives consistently higher reward than Exp-IX over the course of learning.
    Among the runs that converge to $SS,$ the bi-level player attains lower reward, which is consistent with $SS$ being less favorable under the focal player's objective.
    }
    \label{fig:scenario_2}
\end{figure}

%% file: figures/BoS.tex
\begin{tikzpicture}
\begin{axis}[
    ybar,
    ymin=0, ymax=90,
    ylabel={\% of runs},
    width=8.0cm,
    height=4.2cm,
    axis lines=box,
    tick align=outside,
    tick style={black, line width=0.6pt},
    major tick length=2.5pt,
    symbolic x coords={B,S,None},
    xtick=data,
    xticklabel style={font=\large},
    yticklabel style={font=\large},
    ylabel style={font=\large},
    bar width=14pt,
    enlarge x limits=0.22,
    legend style={
        at={(0.98,0.98)},
        anchor=north east,
        font=\footnotesize,
        draw=black,
        fill=white
    },
]

\addplot[
    fill=pastelBlue,
    draw=black,
    line width=0.6pt
] coordinates {
    (B,53)
    (S,47)
    (None,0)
};

\addplot[
    fill=pastelSand,
    draw=black,
    line width=0.6pt
] coordinates {
    (B,82)
    (S,9)
    (None,9)
};

\legend{Scenario 1: Exp-IX vs Exp-IX, Scenario 2: Bi-level vs Exp-IX}

\end{axis}
\end{tikzpicture}






%% file: sec/7_Conclusion.tex
\section{Conclusion}
\label{sec:conclusion}

We studied repeated vector-valued games where evaluation uses a fixed objective weight, but the player may deploy alternative weights inside its updates to influence coupled adaptation. We proposed a bi-level algorithm with finite set of candidate weights and established finite-time regret bounds, along with experiments showing that adaptive deployment can steer outcomes toward a higher objective value than fixed-weight baselines. Future work will explore continuous weight selection and 
seek to reduce regret even further. 

%% file: app/app_A.tex
\appendix

This appendix collects the online mirror descent material needed for the regret analysis in Appendices~\ref{app:inner_block_regret} and~\ref{app:outer_block_regret}.

\subsection{Auxiliary Definitions and Regret Bound} \label{OMD_review}

We first present two basic definitions. 

\begin{definition}[Bregman divergence]
\label{def:bregman_div_app}
Let \(\mathcal{K}\subseteq\mathbb{R}^p\) be convex and let \(R:\mathcal{K}\to\mathbb{R}\) be differentiable. The Bregman divergence induced by \(R\) is
\begin{equation}
\label{eq:bregman_div_app}
D_R(z\|w)
:=
R(z)-R(w)-\langle \nabla R(w), z-w\rangle.
\end{equation}
\end{definition}

\begin{definition}[Online mirror descent]
\label{def:omd_app}
Let \(\mathcal{K}\subseteq\mathbb{R}^p\) be a nonempty compact convex set, and let \(R:\mathcal{K}\to\mathbb{R}\) be a differentiable regularizer. At each round \(t\in[T]\), a player chooses \(z_t\in\mathcal{K}\) and observes a subgradient \(g_t\in\mathbb{R}^p\) of the current loss. Given a step size \(\eta>0\), the OMD update is
\begin{equation}
\label{eq:omd_update_app}
z_{t+1}\in \argmin_{z\in\mathcal{K}}
\left\{
\langle z,g_t\rangle + \frac{1}{\eta}D_R(z\|z_t)
\right\}.
\end{equation}
\end{definition}

The following lemma states a standard regret bound. 

\begin{lemma}[Standard OMD regret bound]
\label{lem:omd_regret_app}
Consider the setting of Definition~\ref{def:omd_app}. Suppose \(R\) is \(\rho\)-strongly convex on \(\mathcal{K}\) with respect to a norm \(\|\cdot\|\), and let \(\|\cdot\|_*\) denote the corresponding dual norm. Define the Bregman diameter
\[
\mathsf{R}^2 := \sup_{z,w\in\mathcal{K}} D_R(z\|w),
\]
which is finite since \(\mathcal{K}\) is compact and \(D_R\) is continuous. If \(\|g_t\|_*\le \mathsf{L}\) for all \(t\), then for any fixed comparator \(z\in\mathcal{K}\),
\begin{equation}
\label{eq:omd_regret_general_app}
\sum_{t=1}^T \langle z_t-z, g_t\rangle
\le
\frac{\mathsf{R}^2}{\eta}
+
\frac{\eta}{2\rho}\sum_{t=1}^T \|g_t\|_*^2
\le
\frac{\mathsf{R}^2}{\eta}
+
\frac{\eta T \mathsf{L}^2}{2\rho}.
\end{equation}
In particular, choosing
\[
\eta=\frac{\mathsf{R}}{\mathsf{L}}\sqrt{\frac{2\rho}{T}}
\]
yields
\begin{equation}
\label{eq:omd_regret_simplified_app}
\sum_{t=1}^T \langle z_t-z, g_t\rangle
\le
\mathsf{R}\,\mathsf{L}\sqrt{\frac{2T}{\rho}}.
\end{equation}
\end{lemma}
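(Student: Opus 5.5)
The statement to prove is the standard OMD regret bound, Lemma~\ref{lem:omd_regret_app}. I would use the usual three-point (Bregman) argument, which runs in four steps.

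\textbf{Step 1 (optimality condition).} Write the first-order optimality condition of \eqref{eq:omd_update_app}. Since $z_{t+1}$ minimizes $\langle z,g_t\rangle+\frac1\eta D_R(z\|z_t)$ over the convex set $\mathcal K$, for every $z\in\mathcal K$
\[
\big\langle \eta g_t+\nabla R(z_{t+1})-\nabla R(z_t),\, z-z_{t+1}\big\rangle\ge 0 .
\]

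\textbf{Step 2 (three-point identity).} Apply
\[
\langle \nabla R(z_{t+1})-\nabla R(z_t), z-z_{t+1}\rangle = D_R(z\|z_t)-D_R(z\|z_{t+1})-D_R(z_{t+1}\|z_t),
\]
which follows directly from Definition~\ref{def:bregman_div_app}. Combined with Step 1 it gives
\[
\eta\langle z_{t+1}-z,g_t\rangle\le D_R(z\|z_t)-D_R(z\|z_{t+1})-D_R(z_{t+1}\|z_t).
\]

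\textbf{Step 3 (one-step term).} Decompose $\langle z_t-z,g_t\rangle=\langle z_{t+1}-z,g_t\rangle+\langle z_t-z_{t+1},g_t\rangle$. Bound the second piece by H\"older's inequality followed by Young's inequality:
\[
\langle z_t-z_{t+1},g_t\rangle\le \|z_t-z_{t+1}\|\,\|g_t\|_*\le \frac{\rho}{2\eta}\|z_t-z_{t+1}\|^2+\frac{\eta}{2\rho}\|g_t\|_*^2 .
\]
By $\rho$-strong convexity, $D_R(z_{t+1}\|z_t)\ge\frac\rho2\|z_{t+1}-z_t\|^2$. Hence the negative term $-\frac1\eta D_R(z_{t+1}\|z_t)$ from Step 2 cancels the quadratic term, leaving
\[
\langle z_t-z,g_t\rangle\le\frac1\eta\big(D_R(z\|z_t)-D_R(z\|z_{t+1})\big)+\frac{\eta}{2\rho}\|g_t\|_*^2 .
\]

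\textbf{Step 4 (summation and tuning).} Sum over $t\in[T]$. The divergence terms telescope to at most $\frac1\eta D_R(z\|z_1)\le \mathsf R^2/\eta$, using nonnegativity of $D_R(z\|z_{T+1})$, which is itself a consequence of convexity. This yields the first inequality in \eqref{eq:omd_regret_general_app}, and the second follows from $\|g_t\|_*\le\mathsf L$. Substituting $\eta=\frac{\mathsf R}{\mathsf L}\sqrt{2\rho/T}$ makes each of the two terms equal to $\mathsf R\mathsf L\sqrt{T/(2\rho)}$. Their sum is $\mathsf R\mathsf L\sqrt{2T/\rho}$, which is \eqref{eq:omd_regret_simplified_app}.

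\textbf{Main obstacle.} The delicate step is Step 1. The optimality condition needs $R$ to be differentiable at $z_{t+1}$. For the negative entropy used later, this holds in the relative interior of the simplex, and the multiplicative-weights form of the update keeps the iterates in the relative interior when they are initialized with no zero entries, as Algorithm~\ref{alg:outer_weight} requires. I would state this explicitly, or otherwise argue via subgradients with the normal cone of $\mathcal K$. A smaller point is that the finiteness of $\mathsf R^2$ must be taken as given, as the lemma states. For the entropy, $D_R(z\|w)$ is unbounded near the boundary, so in the later applications one would instead bound $D_R(z\|z_1)\le\log|\mathcal A|$ using a uniform $z_1$; the proof above only uses $D_R(z\|z_1)$, so it goes through in that case too.
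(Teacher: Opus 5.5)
Your proof is correct. The paper gives no proof of this lemma and only cites it as the standard OMD bound; your argument is exactly that standard three-point proof, and every step checks out: the first-order optimality condition, the three-point identity, H\"older--Young with the quadratic term absorbed by $\rho$-strong convexity, telescoping, and balancing the two terms at $\eta=\frac{\mathsf R}{\mathsf L}\sqrt{2\rho/T}$.

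Your side remark is also right and worth keeping. For the negative entropy the supremum defining $\mathsf R^2$ is infinite, so the paper's later claims $\mathsf R^2\le\log|\mathcal A|$ and $\mathsf R^2\le\log m$ hold only in the form you use: $D_R(z\|z_1)\le\log|\mathcal A|$ (resp.\ $\log m$) with uniform $z_1$, which is all your telescoping step needs.
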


For the proofs of Theorems~\ref{thm:inner_block_regret} and~\ref{thm:outer_block_regret}, we apply Lemma~\ref{lem:omd_regret_app} on probability simplices with the negative-entropy regularizer
\[
R(x)=\sum_j x(j)\log x(j).
\]
In this case, \(R\) is \(1\)-strongly convex with respect to \(\|\cdot\|_1\).
Then \(\rho=1\) and the dual norm is \(\|\cdot\|_\infty\). Moreover, the associated Bregman divergence is the Kullback--Leibler divergence.

For the simplex \(\Delta(\mathcal A)\), the Bregman diameter satisfies
\[
\mathsf{R}^2 \le \log |\mathcal A|,
\]
and for the simplex \(\Delta([m])\) it satisfies
\[
\mathsf{R}^2 \le \log m.
\]

\subsection{Proof of Theorem~\ref{thm:inner_block_regret}}
\label{app:inner_block_regret}

Fix a block \(\mathcal I_k\), and let \(T_k:=|\mathcal I_k|\). The inner update over this block is OMD on \(\Delta(\mathcal A)\) with regularizer
\[
R_q(q)=\sum_{a\in\mathcal A} q(a)\log q(a).
\]
Therefore, by the simplex specialization in Appendix~\ref{OMD_review}, 

Lemma~\ref{lem:omd_regret_app} applies with
\[
\rho=1,
\qquad
\|\cdot\|_*=\|\cdot\|_\infty,
\qquad
\mathsf{R}^2\le \log |\mathcal A|.
\]

It remains to bound \(\|\widehat g_t\|_\infty\). By \eqref{eq:inner_ix_estimator}, the vector \(\widehat g_t\) has only one nonzero entry, namely at the sampled action \(a_t\). Therefore,
\[
\|\widehat g_t\|_\infty
=
\frac{|\langle \psi^{J_k},u_t\rangle|}{Q_t[J_k,\cdot](a_t)+\gamma_q}
\le
\frac{|\langle \psi^{J_k},u_t\rangle|}{\gamma_q}.
\]
Since \(\psi^{J_k}\in S^{d-1}\), we have \(\|\psi^{J_k}\|_2=1\). By Assumption~\ref{ass:standing_algorithms}.\ref{ass:standing_1},
\[
\|u_t\|_\infty \le U,
\]
and hence
\[
\|u_t\|_2 \le \sqrt{d}\,\|u_t\|_\infty \le \sqrt{d}\,U.
\]
Therefore, by Cauchy--Schwarz,
\[
|\langle \psi^{J_k},u_t\rangle|
\le
\|\psi^{J_k}\|_2\,\|u_t\|_2
\le
\sqrt{d}\,U.
\]
Thus,
\[
\|\widehat g_t\|_\infty \le \frac{\sqrt d\,U}{\gamma_q}
\qquad \text{for all } t\in\mathcal I_k.
\]
Then Lemma~\ref{lem:omd_regret_app} applies with
\[
\mathsf{L}=\frac{\sqrt d\,U}{\gamma_q}.
\]

Applying \eqref{eq:omd_regret_simplified_app} with the comparator \(q_k^\star\) from \eqref{eq:inner_block_best_hindsight} over the \(T_k\) rounds in block \(\mathcal I_k\), we obtain
\[
\sum_{t\in\mathcal I_k}\big\langle Q_t[J_k,\cdot]-q_k^\star,\widehat g_t\big\rangle
\le
\sqrt{\log |\mathcal A|}\cdot \frac{\sqrt d\,U}{\gamma_q}\sqrt{2T_k},
\]
which is exactly
\[
\sum_{t\in\mathcal I_k}\big\langle Q_t[J_k,\cdot]-q_k^\star,\widehat g_t\big\rangle
\le
\frac{\sqrt d\,U}{\gamma_q}\sqrt{2T_k\log |\mathcal A|}.
\]
\qed

\subsection{Proof of Theorem~\ref{thm:outer_block_regret}}
\label{app:outer_block_regret}

The outer update in Algorithm~\ref{alg:outer_weight} is OMD on the simplex \(\Delta([m])\) with regularizer
\[
R_p(p)=\sum_{j=1}^m p(j)\log p(j).
\]
Therefore, by the simplex specialization in Appendix~\ref{OMD_review}, 
Lemma~\ref{lem:omd_regret_app} applies with
\[
\rho=1,
\qquad
\|\cdot\|_*=\|\cdot\|_\infty,
\qquad
\mathsf{R}^2\le \log m.
\]

It remains to bound \(\|\widehat g_k^{(P)}\|_\infty\). By \eqref{eq:outer_ix_estimator}, the vector \(\widehat g_k^{(P)}\) has only one nonzero entry, namely at the sampled index \(J_k\). Hence,
\[
\|\widehat g_k^{(P)}\|_\infty
=
\frac{|r_k^{obj}|}{p_k(J_k)+\gamma_p}
\le
\frac{|r_k^{obj}|}{\gamma_p}.
\]
Since the objective weight is normalized, \(\|\psi^{obj}\|_2=1\), and Assumption~\ref{ass:standing_algorithms}.\ref{ass:standing_1} gives
\[
\|u_t\|_\infty \le U,
\]
and therefore we have
\[
\|u_t\|_2 \le \sqrt d\,\|u_t\|_\infty \le \sqrt d\,U.
\]
Thus, by Cauchy--Schwarz,
\[
|\langle \psi^{obj},u_t\rangle|
\le
\|\psi^{obj}\|_2\,\|u_t\|_2
\le
\sqrt d\,U.
\]
Therefore,
\[
|r_k^{obj}| \le \sqrt d\,U
\]
and 
\[
\|\widehat g_k^{(P)}\|_\infty \le \frac{\sqrt d\,U}{\gamma_p}
\qquad \text{for all } k\in[h].
\]

Applying \eqref{eq:omd_regret_simplified_app} over the \(h\) outer rounds with comparator \(p\in\Delta([m])\), we obtain
\[
\sum_{k=1}^h \big\langle p_k-p,\widehat g_k^{(P)}\big\rangle
\le
\sqrt{\log m}\cdot \frac{\sqrt d\,U}{\gamma_p}\sqrt{2h},
\]
which gives
\[
\sum_{k=1}^h \big\langle p_k-p,\widehat g_k^{(P)}\big\rangle
\le
\frac{\sqrt d\,U}{\gamma_p}\sqrt{2h\log m}.
\]
\qed

\subsection{Proof of Theorem~\ref{thm:bilevel_regret}}
\label{app:bilevel_regret}

The regret of the bi-level algorithm in~\eqref{eq:bilevel_regret_def} can be equivalently written as
\begin{multline}
    \mathcal R_T^{\text{bi}} = \max_{\substack{p\in\Delta([m]) \\ q_1,\dots,q_h\in\Delta(A)}}
   \left[ \sum_{k=1}^h \langle p_k-p, \hat g_k^{(P)}\rangle \right. \\ \left.
    +\sum_{k=1}^h\sum_{t_k\in I_k}\langle Q_t[J_k,\cdot]-q_k,\hat g_t\rangle\right],
\end{multline}
which can be bounded with
\begin{align}
    \mathcal R_T &\leq \max_{p\in\Delta([m])}
    \sum_{k=1}^h \langle p_k-p, \hat g_k^{(P)}\rangle \\
    &+\max_{q_1,\dots,q_h\in\Delta(A)}\sum_{k=1}^h\sum_{t_k\in I_k}\langle Q_t[J_k,\cdot]-q_k,\hat g_t\rangle\\
    &=\mathcal R^{out}_T+\sum_{k=1}^h \mathcal{R}_k^{in},\label{eq:decomp_regret}
\end{align}
where
\begin{align}
    \mathcal R^{out}_T &= \max_{p\in\Delta([m])}
    \sum_{k=1}^h \langle p_k-p, \hat g_k^{(P)}\rangle \\
    \mathcal{R}^{in}_k &=\max_{q\in\Delta(A)}\sum_{t_k\in I_k} \langle Q_t[J_k,\cdot]-q,\hat g_t\rangle.
\end{align}
Theorem~\ref{thm:outer_block_regret} gives
\begin{equation}
    \mathcal R^{out}_T\leq \frac{\sqrt d U}{\gamma_p}\sqrt{2h \log m},
\end{equation}
and Theorem~\ref{thm:inner_block_regret} gives
\begin{equation}
    \mathcal{R}^{in}_k\leq \frac{\sqrt{d}\,U}{\gamma_q}\,\sqrt{2|I_k|\log|\mathcal A|}.\label{eq:outer_bound}
\end{equation}
Then, summing over $k\in[h]$ gives 
\begin{align}
    \sum_{k=1}^h \mathcal{R}_k^{in} &= \sum_{k=1}^h \sum_{t_k\in I_k}\langle Q_t[J_k,\cdot]-q^*,\hat g_t\rangle \\
    &\leq \frac{\sqrt{d}\,U}{\gamma_q}\,\sqrt{2\log|\mathcal A|}\sum_{k=1}^h\sqrt{|I_k|}\\
    &\leq \frac{\sqrt{d}\,U}{\gamma_q}\,\sqrt{2hT\log|\mathcal A|}, \label{eq:inner_bound}
\end{align}
where the last line follows from an application of the Cauchy-Schwarz inequality.
Substituting \eqref{eq:outer_bound} and \eqref{eq:inner_bound} in \eqref{eq:decomp_regret} gives
\begin{equation}
    \mathcal R_T \leq \sqrt{d}\,U\left(\frac{\sqrt{2h\log m}}{\gamma_p}  +\frac{\sqrt{2hT \log |A|}}{\gamma_q}\right).
\end{equation}
\qed

%% file: iclr2025_conference.bib
@book{OR94,
  title={A course in game theory},
  author={Osborne, Martin J and Rubinstein, Ariel},
  year={1994},
  publisher={MIT press}
}

@book{ROC15,
  title={Convex analysis},
  author={Rockafellar, R Tyrrell},
  volume={28},
  year={1997},
  publisher={Princeton university press}
}

@book{boyd04,
  title={Convex optimization},
  author={Boyd, Stephen and Vandenberghe, Lieven},
  year={2004},
  publisher={Cambridge university press}
}

@article{BC12,
  title={Regret Analysis of Stochastic and Nonstochastic Multi-armed Bandit Problems},
  author={Bubeck, S{\'e}bastien and Cesa-Bianchi, Nicolo},
  journal={Machine Learning},
  volume={5},
  number={1},
  pages={1--122},
  year={2012}
}

@book{LC20,
  title     = {Bandit Algorithms},
  author    = {Lattimore, Tor and Szepesv{\'a}ri, Csaba},
  publisher = {Cambridge University Press},
  year      = {2020}
}

@inproceedings{NG15,
author = {Neu, Gergely},
title = {Explore no more: improved high-probability regret bounds for non-stochastic bandits},
year = {2015},
publisher = {MIT Press},
address = {Cambridge, MA, USA},
booktitle = {Proceedings of the 29th International Conference on Neural Information Processing Systems - Volume 2},
pages = {3168–3176},
numpages = {9},
location = {Montreal, Canada},
series = {NIPS'15}
}

@inproceedings{Koc14,
  title     = {Efficient Learning by Implicit Exploration in Bandit Problems with Side Observations},
  author    = {Koc{\'a}k, Tom{\'a}{\v s} and Neu, Gergely and Valko, Michal and Munos, R{\'e}mi},
  booktitle = {Advances in Neural Information Processing Systems (NeurIPS) 27},
  pages     = {613--621},
  year      = {2014}
}

@book{Mie99,
  title={Nonlinear Multiobjective Optimization},
  author={Miettinen, Kaisa},
  year={1999},
  publisher={Kluwer Academic Publishers}
}

@book{Ehr05,
  title={Multicriteria Optimization},
  author={Ehrgott, Matthias},
  year={2005},
  publisher={Springer}
}

@article{Kha24,
  title={A multi-objective optimisation approach with improved pareto-optimal solutions to enhance economic and environmental dispatch in power systems},
  author={Khalil, Muhammad Ilyas Khan and Rahman, Izaz Ur and Zakarya, Muhammad and Zia, Ashraf and Khan, Ayaz Ali and Qazani, Mohammad Reza Chalak and Al-Bahri, Mahmood and Haleem, Muhammad},
  journal={Scientific Reports},
  volume={14},
  number={1},
  pages={13418},
  year={2024},
  publisher={Nature Publishing Group UK London}
}

@article{Hay21,
   title={A practical guide to multi-objective reinforcement learning and planning},
   volume={36},
   ISSN={1573-7454},
   url={http://dx.doi.org/10.1007/s10458-022-09552-y},
   DOI={10.1007/s10458-022-09552-y},
   number={1},
   journal={Autonomous Agents and Multi-Agent Systems},
   publisher={Springer Science and Business Media LLC},
   author={Hayes, Conor F. and Rădulescu, Roxana and Bargiacchi, Eugenio and Källström, Johan and Macfarlane, Matthew and Reymond, Mathieu and Verstraeten, Timothy and Zintgraf, Luisa M. and Dazeley, Richard and Heintz, Fredrik and Howley, Enda and Irissappane, Athirai A. and Mannion, Patrick and Nowé, Ann and Ramos, Gabriel and Restelli, Marcello and Vamplew, Peter and Roijers, Diederik M.},
   year={2022},
   month=Apr }

@article{RVWD13,
author = {Roijers, Diederik M. and Vamplew, Peter and Whiteson, Shimon and Dazeley, Richard},
title = {A survey of multi-objective sequential decision-making},
year = {2013},
issue_date = {October 2013},
publisher = {AI Access Foundation},
address = {El Segundo, CA, USA},
volume = {48},
number = {1},
issn = {1076-9757},
journal = {J. Artif. Int. Res.},
month = oct,
pages = {67–113},
numpages = {47}
}

@inproceedings{FKS21,
 author = {Gabriele Farina and Christian Kroer and Tuomas Sandholm},
 booktitle = {AAAI},
 date = {2021-02},
 note = {},
 title = {Faster Game Solving via Predictive Blackwell Approachability: Connecting Regret Matching and Mirror Descent},
 url = {https://arxiv.org/abs/2007.14358},
 year = {2021}
}

@article{GM25,
  title={Blackwell's Approachability with Approximation Algorithms},
  author={Garber, Dan and Massalha, Mhna},
  journal={arXiv preprint arXiv:2502.03919},
  year={2025}
}

@techreport{techreport,
    author = {Asadollahi, Ehsan and Hawkins, Calvin and Hale, Matthew},
    title = {Tech Report: Online Scalarization in Vector-Valued Games},
    institution = {Georgia Institute of Technology},
    year = {2026},
    note = {Online at: \url{https://sites.gatech.edu/ece-corelab/files/2026/03/Online_Scalarization_TechReport.pdf}}
}

@InProceedings{PM13,
  title = 	 {Approachability, fast and slow},
  author = 	 {Perchet, Vianney and Mannor, Shie},
  booktitle = 	 {Proceedings of the 26th Annual Conference on Learning Theory},
  pages = 	 {474--488},
  year = 	 {2013},
  editor = 	 {Shalev-Shwartz, Shai and Steinwart, Ingo},
  volume = 	 {30},
  series = 	 {Proceedings of Machine Learning Research},
  address = 	 {Princeton, NJ, USA},
  month = 	 {12--14 Jun},
  publisher =    {PMLR},
  url = 	 {https://proceedings.mlr.press/v30/Perchet13.html}
}

@article{Blk56,
    author = {David Blackwell},
    title = {An analog of the minimax theorem for vector payoffs},
    journal = {Pacific Journal of Mathematics},
    year = {1956},
    url = {https://msp.org/pjm/1956/6-1/pjm-v6-n1-p01-s.pdf}
}

@article{Shi16,
  author  = {Nahum Shimkin},
  title   = {An Online Convex Optimization Approach to Blackwell's Approachability},
  journal = {Journal of Machine Learning Research},
  year    = {2016},
  volume  = {17},
  number  = {129},
  pages   = {1--23},
  url     = {http://jmlr.org/papers/v17/15-339.html}
}

@InProceedings{ABH11,
  title = 	 {Blackwell Approachability and No-Regret Learning are Equivalent},
  author = 	 {Abernethy, Jacob and Bartlett, Peter L. and Hazan, Elad},
  booktitle = 	 {Proceedings of the 24th Annual Conference on Learning Theory},
  pages = 	 {27--46},
  year = 	 {2011},
  editor = 	 {Kakade, Sham M. and von Luxburg, Ulrike},
  volume = 	 {19},
  series = 	 {Proceedings of Machine Learning Research},
  address = 	 {Budapest, Hungary},
  month = 	 {09--11 Jun},
  publisher =    {PMLR},
  url = 	 {https://proceedings.mlr.press/v19/abernethy11b.html}
}

@Article{MT06,
journal={Games and Economic Behavior},
author={Mannor, Shie and Tsitsiklis, John N.},
title={Approachability in repeated games: Computational aspects and a Stackelberg variant},
year={2009},
month={May},
pages={315-325},
volume={66},
number={1},
doi={None},
url={https://ideas.repec.org/a/eee/gamebe/v66y2009i1p315-325.html},
}

@article{Per18,
title = {Approachability, regret and calibration: Implications and equivalences},
journal = {Journal of Dynamics and Games},
volume = {1},
number = {2},
pages = {181-254},
year = {2014},
issn = {2164-6066},
doi = {10.3934/jdg.2014.1.181},
url = {https://www.aimsciences.org/article/id/873959ac-8911-4a81-abb5-7562a8ffa925},
author = {Vianney Perchet}
}

@article{CKR25,
    author = {Giovanni P. Crespi, Daishi Kuroiwa & Matteo Rocca},
    title = {Vector-valued games: characterization of equilibria in matrix games},
    journal = {Mathematical Methods of Operations Research},
    year = {2025}
}

@article{shalev-shwartz_omd_hazan_bubeck,
author = {Shalev-Shwartz, Shai},
title = {Online Learning and Online Convex Optimization},
year = {2012},
issue_date = {February 2012},
publisher = {Now Publishers Inc.},
address = {Hanover, MA, USA},
volume = {4},
number = {2},
issn = {1935-8237},
url = {https://doi.org/10.1561/2200000018},
doi = {10.1561/2200000018},
journal = {Found. Trends Mach. Learn.},
month = feb,
pages = {107–194},
numpages = {88}
}
